\documentclass[twocolumn,aps,nobalancelastpage,10pt,superscriptaddress]{revtex4-2}

\makeatletter
\renewcommand{\paragraph}{%
  \@startsection{paragraph}{4}{0pt}%
  {3.25ex plus 1ex minus .2ex}%
  {1.5ex plus .2ex}%
  {\normalfont\normalsize\bfseries}}
\makeatother

\usepackage[colorlinks=true, allcolors=blue]{hyperref}
\usepackage{graphicx}
\usepackage{amsmath}
\usepackage{amsfonts}
\usepackage{amssymb}
\usepackage{mathtools}
\usepackage{braket}
\usepackage{dsfont}
\usepackage{quantikz}
\usepackage{subcaption}

\usepackage{amsthm}

\newtheorem{theorem}{Theorem}[section]

\newtheorem{lemma}[theorem]{Lemma}

\DeclareMathOperator*{\argmin}{arg\,min}

\newlength{\rxw}
\newlength{\ucw}
\newlength{\sgw}
\colorlet{errcol}{red!60!black}

\newcommand{\rxlab}[1]{\makebox[\rxw][c]{\ensuremath{#1}}} 
\newcommand{\uclab}[1]{\makebox[\ucw][c]{\ensuremath{#1}}}   
\newcommand{\sglab}[1]{\makebox[\sgw][c]{\ensuremath{#1}}}   
\begin{document}

\onecolumngrid

\title{Parameter-Dependent Noise Resilience in the Quantum Approximate Optimization Algorithm}

\author{Jorja Kirk}
\affiliation{School of Mathematics, University of Bristol}
\affiliation{Phasecraft Ltd.}

\author{Ashley Montanaro}
\affiliation{Phasecraft Ltd.}
\affiliation{School of Mathematics, University of Bristol}

\date{\today}

\begin{abstract}
The Quantum Approximate Optimization Algorithm (QAOA) is a variational quantum algorithm used for solving combinatorial optimization problems, which is thought to be a promising application of near-term quantum hardware. However, high rates of noise in these devices can obscure meaningful results. Therefore, it is crucial to understand and suppress the effects of noise on the algorithm's output. Here, we find that the stability of the algorithm in the presence of noise is dependent on its variational parameters. Using a Pauli noise model, we derive an upper bound on the total variation distance between the noisy and noiseless circuit output as a function of these parameters. Results from noisy simulations support this bound and give insight into the best methods to use for parameter generation in the presence of noise. We also introduce new methods of optimization using error penalties and truncated search based on this bound to manufacture error resilience. We extend this analysis to the quantum adiabatic algorithm (QAA) by taking the continuous limit of the QAOA. 
\end{abstract}

\maketitle

\section{Introduction}

Optimization problems are thought to be a significant application of quantum computers, with possible use across fields such as finance, power systems, and transportation logistics~\cite{optop}. Near-term quantum hardware is currently under rapid development and these devices represent an important step toward practical quantum computation~\cite{NISQ}. Limited qubit numbers and relatively high levels of noise, however, make it difficult to realize quantum advantage. Variational quantum algorithms (VQAs) offer a promising approach to achieve this~\cite{vqas,vqe,vqes}. VQAs combine classical and quantum computation, where a parametrized quantum circuit is optimized using classical feedback loops. This structure allows for shallower circuits that are more tolerant to noise, making them especially well-suited for current hardware capabilities~\cite{VQA}. As a result, VQAs have become a central focus in the search for near-term applications of quantum computing.

The Quantum Approximate Optimization Algorithm (QAOA)~\cite{farhi} is a variational algorithm designed to tackle combinatorial optimization problems. Inspired by ideas from quantum annealing and the Quantum Adiabatic Algorithm (QAA)~\cite{qaa}, which evolves a system from the ground state of an initial Hamiltonian to the ground state of a cost Hamiltonian encoding the solution, QAOA can be viewed as a trotterized version of this adiabatic evolution~\cite{farhi}. QAOA operates by alternating between applying problem-specific and mixing Hamiltonians to a quantum state, with the goal of preparing a state that approximates the ground state and hence the optimal solution. 

However, training QAOA on real devices remains a substantial challenge. Noisy hardware can obscure meaningful gradients and make optimization difficult, and the presence of barren plateaus can hinder effective training, especially as the algorithm scales~\cite{BP}. Despite this, QAOA continues to be a candidate for demonstrating practical quantum utility in the NISQ era~\cite{qaoagood}. 

In this work we explore how the choice of variational parameters impacts the resilience of the QAOA circuit to errors. This is a natural question, and indeed  Ref.~\cite{gonzalez2022error} explicitly asked whether specific choices of variational parameters could prevent the proliferation of noise in the resulting circuit. In fact, it is easy to see numerically that the choice of parameters does matter in terms of performance under errors. Fig.~\ref{fig:star} shows the energy produced with and without noise for many sets of parameters, for a simple noise model where we insert one random X or Z error in the circuit. We see that the best parameters in the noise free case are not necessarily the best in the noisy case and there is little correlation between the two. This raises the question of how we should choose parameters when evaluating QAOA on real noisy devices. 

\begin{figure*}[t]
    \centering
    \includegraphics[width=\linewidth]{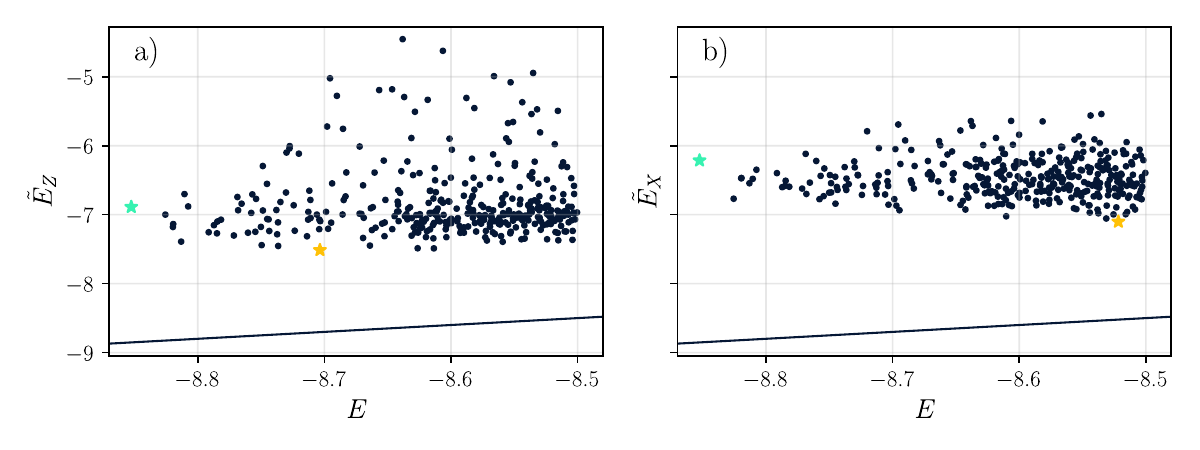}
    \caption{Energies obtained by QAOA circuits experiencing one error, compared with exact energies. Each point represents one set of parameters used in 15 layers of QAOA to solve Max-Cut on a line graph with 10 nodes. Each run was done in simulation to return energy $E$ and then also with simulated errors in the circuit. The energy of the noisy circuit $\tilde{E}_Z$ or $\tilde{E}_X$ was obtained by averaging over the resulting energy with many different error locations of single Pauli $Z$s or $X$s in a) and b) respectively. The highlighted points show the parameters that perform best without errors (teal) and with errors (yellow). The optimal energy for this graph is $E=-9$ and the solid line indicates where the energy with errors equals the noiseless energy.}\label{fig:star}
\end{figure*}

Motivated by this, we undertake a theoretical and numerical study of how the performance of QAOA under errors depends on the choice of parameters. First, we upper bound the total variation distance between distributions gained from circuits with and without errors based on the variational parameters and degree of the graph. 

Averaging over this we obtain an expected upper bound on the total variation distance between the noisy and noiseless distributions that we can use to qualitatively inspect the relative resilience of different sets of parameters. We find smaller valued and smoother varying parameters are more stable in the presence of noise than larger valued, more erratic parameters.

We find that for certain levels of noise taking precomputed parameters, without optimization, is often the most stable approach. While not always performing best in the exact simulated case, these parameters can often provide the best results in the noisy case without the burden of optimization. We construct methods for manufacturing error resilience via additional penalties during optimization and truncating the parameter search space to force smaller valued parameters.

Given the structure of the QAOA parameters that minimize the bound while performing well, we thought it natural to extend this notion to the QAA to find if this gives more insight into suitable QAOA parameter schedules. Taking the infinite time limit of QAOA we recover the QAA and find that our metric of error resilience is lower bounded when using the standard QAA with a single schedule controlling the evolution. We attempt to improve this lower bound by investigating a two schedule approach but find that measures made to attempt to decrease this lower bound increase the runtime of the algorithm such that the bound cannot be improved.

\subsection{Previous Work}

There is evidence to show that QAOA can achieve superior results compared to classical optimization algorithms~\cite{qaoaevidence,qaoaevidence2,qaoaevidence3}. QAOA can also provide a simple but useful model for other quantum algorithms, in particular for quantum simulation~\cite{qaoaforsim}. 

Finding optimal parameters for QAOA is NP-hard~\cite{hardparams}. Methods for the classical optimization of QAOA parameters have been widely studied since noise and time limitations require as few calls to the quantum computer as possible. These include gradient-based and gradient-free methods~\cite{gradient} as well as reinforcement learning~\cite{reinfo}. Similarly, methods are being developed to minimize the need for optimization on hardware irrespective of the classical optimizer, such as warm-starting with accurate guesses for the optimal parameters~\cite{warmstarts,egger2021warm} or parameter prediction strategies which completely bypass the need for optimization~\cite{leo,ieva}.

To utilize the potential of QAOA, it is necessary to ensure the algorithm is robust to the effects of noise. As noted above, VQAs in general tend to suppress noise naturally, as the classical optimization adapts the parameters accordingly~\cite{vqas,fontana2021evaluating}, an effect that has been shown to hold experimentally~\cite{colless2018computation}. Barren plateaus induced by noise in this setting suppress cost gradients exponentially~\cite{BP,BarrenPlateaus}. More recently, work on convergence guarantees under noise has been done for the Variational Quantum Eigensolver (VQE), bounding the shift in the optimal parameters under both coherent and incoherent noise~\cite{resilience_VQEs}. 

For QAOA, work has been done to characterize the effect of noise both numerically and analytically~\cite{xue2021effects, marshall2020characterizing}. This includes proving scaling of output-state fidelity, cost function value, and cost gradient with noise rate, number of qubits and circuit depth yielding an explicit noise-depth trade-off. 

Other results give limits to the possible advantage gained from optimization techniques such as QAOA, including a threshold error rate derived by modelling single qubit error propagation, which is required for the possibility of quantum advantage for VQAs for solving classical optimization problems~\cite{gonzalez2022error}. Similar results were found via relative-entropy inequalities~\cite{stilck2021limitations}. 

\subsection{Organization}

We begin Section \ref{section:preliminaries} by introducing necessary preliminaries including the QAOA, the Max-Cut problem and the QAA and then go on to introduce the noise model we will use throughout. In section \ref{section:theorem} we provide our main result, upper bounding the distance between noisy and noiseless circuit outputs in Theorem \ref{theorem} and provide some numerical results to support this. In section \ref{section:choosing} we investigate the effectiveness of parameters in the presence of noise and introduce new methods of noise-informed parameter selection. Then in section \ref{section:adiabatic} we extend these notions to the QAA. 

\section{Preliminaries}\label{section:preliminaries}

\subsection{QAOA}

Standard QAOA is a VQA with $p$ layers. Each layer consists of a cost and a mixer term with variational parameters $(\gamma_i,\beta_i)$ respectively. Overall, the circuit is parametrized by $2p$ parameters $(\gamma,\beta)$ such that $\gamma = (\gamma_1,\gamma_2,\ldots\gamma_p)$ and $\beta = (\beta_1,\beta_2,\ldots\beta_p)$. 

The goal of QAOA is to find the bitstring(s) $\textbf{z}\in\{0,1\}^n$ that minimize(s) the classical cost function $C(\textbf{z})$ of the optimization problem. The cost Hamiltonian encodes this as
\begin{align}
    H_C  = \sum_{z\in\{0,1\}^n}C(\textbf{z})\ket{\textbf{z}}\bra{\textbf{z}}.
\end{align}
The mixer Hamiltonian, $H_M$, which does not commute with the cost Hamiltonian, is needed in order to explore the solution space.

The full circuit is then
\begin{align}
    U(\gamma,\beta) = U_p\ldots U_1 \quad\text{where}\quad U_k = e^{i\beta_kH_M}e^{i\gamma_kH_C},
\end{align}
applied to an initial state $\ket{\psi_0}$ to give 
\begin{align}
    \ket{\psi(\gamma,\beta)} = U(\gamma,\beta)\ket{\psi_0}.
\end{align}
The initial state is, in general, chosen to be an eigenstate of the mixer, for example, for the standard QAOA mixer $H_M = \sum_{i=1}^nX_i$ we choose $\ket{\psi_0}=\ket{+^n}$. 

This state is then measured and the output is fed into a classical optimizer. To optimize the angles, we construct an objective function as the expectation value of the cost Hamiltonian with respect to the final state
\begin{align}
    F(\gamma,\beta) = \bra{\psi(\gamma,\beta)}H_C\ket{\psi(\gamma,\beta)}.
\end{align}
The classical optimizer is used to search for the optimal parameters $(\gamma^*,\beta^*)$ such that 
\begin{align}
    (\gamma^*,\beta^*) = \argmin _{\gamma,\beta} F(\gamma,\beta).
\end{align}

There are many variants of the QAOA that modify the standard structure~\cite{variants}. This can be via parameter structure, such as multi-angle QAOA~\cite{herrman2022multi} and FALQON~\cite{magann2022feedback}; via the choice of mixer and/or initial state, such as ADAPT-QAOA~\cite{zhu2022adaptive} and warm-started QAOA~\cite{egger2021warm} or via changes to the problem structure, such as RQAOA~\cite{bravyi2020obstacles}. In this paper, we will focus on the standard form of QAOA though there is scope to explore this analysis beyond this.

\subsubsection{Max-Cut}
In this work, we will focus on applying QAOA to the Max-Cut problem. Max-Cut is a fundamental graph optimization problem that is widely used as an application of QAOA \cite{farhi,leo,crooks2018performance,harrigan2021quantum,wang2018quantum}. It involves partitioning a set of graph nodes into two groups via a cut that maximizes the sum of the weights of the edges crossed. For a graph $G(V,E)$ where $V$ is the set of nodes with $|V| = n$ and $E$ is the set of edges, the goal is then to find
\begin{align}
    C^*_{\text{cut}}=\max_{\textbf{z}\in \{\pm1\}^n}C_{\text{cut}}(\mathbf{z}) = \max_{\textbf{z}\in \{\pm1\}^n}\frac{1}{2}\sum_{(i,j)\in E} w_{ij}\left(1 - z_i z_j\right),
\end{align}
which is equivalent to minimizing the Ising energy
\begin{align}\label{maxcutcost}
C^*=\min_{\textbf{z}\in \{\pm1\}^n}C(\textbf{z})=\min_{\textbf{z}\in \{\pm1\}^n}\sum_{(i,j)\in E}w_{ij}z_iz_j,
\end{align}
where $w_{ij}\in \mathbb{R}$ is the edge weight between nodes $i$ and $j$. Throughout, we will focus on the unweighted case where $w_{ij} =1$ for all $(i,j)$.

Finding the optimal solution is an NP-hard problem~\cite{nphard,korte2011combinatorial}. The best guarantee that has been proven for efficient algorithms solving Max-Cut is from the Goemans–Williamson algorithm~\cite{GW,GW1}, which delivers a solution with approximation ratio 0.878, and under the unique games conjecture this is the optimal bound~\cite{ugc}.

Specifically for the Max-Cut problem, the cost function defined in \eqref{maxcutcost} can be mapped to a Hamiltonian whose ground state encodes the solution to this problem
\begin{align}
    H_C = \sum_{(i,j)\in E}w_{ij}Z_iZ_j,
\end{align}
and the mixer Hamiltonian for this problem is typically
\begin{align}
    H_M = \sum_{i=1}^nX_i.
\end{align}
Some of the work done here would more than likely be generalizable to other Ising type problems, but it is future work to verify this.

\subsubsection{Adiabatic evolution}

QAOA in the infinite depth limit can implement the QAA, and QAA can provide useful intuition for QAOA’s performance. In general, the QAA works as follows: consider a quantum system that evolves under the time-dependent Hamiltonian $H(t)$ via the Schr\"odinger equation
\begin{align}\label{schro}
    i\frac{d\ket{\psi(t)}}{dt}=H(t)\ket{\psi(t)}.
\end{align} 
Let $\ket{E_k;t}$ denote the instantaneous  eigenstates of the Hamiltonian, $H(t)\ket{E_k;t} = E_k(t)\ket{E_k;t}$, where $E_k(t)$ are the corresponding eigenvalues and $k$ labels the eigenstates, with $k=0$ the ground state.

The quantum adiabatic theorem states that if the Hamiltonian varies slowly enough in $t$, the state of the system will stay close to the instantaneous ground state of the Hamiltonian at each time $t$, provided the gap $g(t) = E_1(t) - E_0(t)$ is non-zero for all $t$ \cite{born1928beweis,kato1950adiabatic}.

In the form used by Roland and Cerf \cite{roland2002quantum} the adiabatic theorem states, assuming a non-degenerate ground state, and a Hamiltonian free of fast driven oscillations \cite{amin2009consistency},
if we prepare the system at time $t=0$ in its ground state $\ket{E_0;0}$ and let it evolve under the Hamiltonian $H(t)$ for time $T$ to $\ket{\psi(T)}$, then
\begin{align}\label{adiabatic-condition}
    |\braket{E_0;T|\psi(T)}|^2 \geq 1-\eta^2,
\end{align}
provided that
\begin{align}
    \frac{\max_{0\leq t\leq T}\bigg|\bra{E_1;t}\frac{dH}{dt}\ket{E_0;t}\bigg|}{\min_{0\leq t\leq T}g(t)^2}\leq \eta.
\end{align}
for $\eta\ll1$. This is the global condition for adiabaticity. Asserting the same criterion on each infinitesimal time interval gives the local condition
\begin{align}
    \frac{\big|\bra{E_1;t}\frac{dH}{dt}\ket{E_0;t}\big|}{g(t)^2}\leq \eta, \label{localadiabatic}
\end{align}
which is required to hold for all $t$ \cite{roland2002quantum}. 

In general the Hamiltonian takes the form 
\begin{align}\label{sym}
    H(t) = s(t) H_{\text{final}} + (1-s(t)) H_{\text{initial}},
\end{align}
for some schedule $s(t)$ where $s(0)=0$ and $s(T)=1$ and $H_{\text{initial}}$ is a Hamiltonian of which the initial state of our system is a ground state and $H_{\text{final}}$ is some Hamiltonian whose ground state encodes the solution to the problem we are trying to solve. In relation to QAOA, $H_{\text{initial}}=-H_M$ and $H_{\text{final}}=H_C$.

To satisfy \eqref{localadiabatic}, it is sufficient to choose s(t) such that
\begin{align}
    \left|\frac{ds}{dt}\right| \leq \eta\; \frac{g^2(t)}{\left|\bra{E_1;t}\frac{dH}{ds}\ket{E_0;t}\right|}.
\end{align}
$s(t)$ is chosen to satisfy these bounds whilst minimizing the total time the algorithm takes.

This form of the adiabatic theorem comes from the first-order estimation of the transition amplitude rather than a rigorous theorem. Roland and Cerf cite Bransden and Joachain's \textit{Quantum Mechanics} \cite{bransden2000quantum} (and an often cited additional source is A. Messiah's \textit{Quantum Mechanics} \cite{messiah1962quantum}), which use leading-order estimates in the non-adiabatic coupling to give the $\bra{E_1}\dot{H}\ket{E_0}/g^{2}$ criterion.

Conditions of this form have been shown to not be sufficient in general \cite{tong2005quantitative}. Rigorous versions carry further terms involving the second derivative of $H$ and higher powers of $\tfrac{1}{g}$. In particular Theorem 3 from \cite{jansen2007bounds} bounds the distance between the evolved state and the instantaneous ground subspace by terms of this form.

However, we use the first-order approximation here because the violations of this condition are argued to arise from resonant transitions between energy levels, so that for Hamiltonians with no fast driven oscillations the original condition holds \cite{amin2009consistency}. The schedules we use are smooth and monotone and for our examples $H(t)$ has no driving frequency, hence this approximation suffices.

\subsection{Modelling and Characterizing Noise}

Modelling noise exactly can be very challenging~\cite{lidarandbrun} and different quantum devices experience different kinds of noise to varying degrees. In order to approximate and therefore combat this noise, there are various channels that can be chosen to model it. For example, depolarizing noise corresponds to bit and/or phase flips occurring throughout the circuit and obscures meaningful results by driving the state toward the maximally mixed state~\cite{nielsen2010quantum}.

It has been shown that by applying techniques such as Pauli twirling or randomized compiling, arbitrary noise channels can be transformed into effective Pauli channels~\cite{twirling,twirling2,twirling3}. 
\begin{align}
    \mathcal{P}(\rho) = (1-\epsilon)I\rho I + \epsilon_X X\rho X + \epsilon_Y Y\rho Y +\epsilon_Z Z \rho Z,\label{Pauli}
\end{align}
where $\epsilon = \sum_{K\in\{X,Y,Z\}}\epsilon_K$. In particular, when the two-qubit gates are Clifford gates, these methods ensure that the logical circuit remains unchanged while the noise is simplified to a stochastic Pauli form. This process involves averaging the error channel over a set of Pauli operators, effectively symmetrizing the noise so that it behaves like a probabilistic mixture of Pauli errors, which are easier to analyse and simulate. From this point, we assume noise will take the form of Pauli channels given in \eqref{Pauli}. In simulation, this is done by the random injection of single Paulis on each qubit after each layer, sampled according to the probabilities $(1-\epsilon,\epsilon_X,\epsilon_Y,\epsilon_Z)$. 

\subsubsection{Propagating Errors}

\begin{figure*}[t]
  \centering
  \begin{subfigure}{\textwidth}
    \centering
    \resizebox{\textwidth}{!}{%
    \begin{quantikz}[row sep=0.2cm, column sep=0.3cm,
      execute at end picture={%
        \coordinate (laneA) at ([yshift=2mm]\tikzcdmatrixname-2-7.north);%
        \draw[->,thick,errcol,rounded corners=2pt]
          (\tikzcdmatrixname-2-5.north)
          -- (\tikzcdmatrixname-2-5.north |- laneA)
          -- (\tikzcdmatrixname-2-11.north |- laneA)
          -- (\tikzcdmatrixname-2-11.north);%
      }]
      \lstick{$\ket{+}$}
        & \gate[4]{\uclab{e^{i\gamma_1 H_C}}}
        & \gate{\rxlab{R_x(\beta_1)}}
        & \ \ldots\
        &
        & \gate[4]{\uclab{e^{i\gamma_{s+1} H_C}}}
        & \gate{\rxlab{R_x(\beta_{s+1})}}
        & \ \ldots\
        & \gate[4]{\uclab{e^{i\gamma_p H_C}}}
        & \gate{\rxlab{R_x(\beta_p)}}
        &
        & \meter{}
      \\[0.3cm]
      \lstick{$\ket{+}$}
        &
        & \gate{\rxlab{R_x(\beta_1)}}
        & \ \ldots\
        & \gate[style={dashed,fill=red!20}]{\sglab{Z}}
        &
        & \gate[style={fill=gray!20,thick}]{\rxlab{R_x(-\beta_{s+1})}}
        & \ \ldots\
        &
        & \gate[style={fill=gray!20,thick}]{\rxlab{R_x(-\beta_p)}}
        & \gate[style={fill=red!20}]{\sglab{Z}}
        & \meter{}
      \\
      \vdots
      \\
      \lstick{$\ket{+}$}
        &
        & \gate{\rxlab{R_x(\beta_1)}}
        & \ \ldots\
        &
        &
        & \gate{\rxlab{R_x(\beta_{s+1})}}
        & \ \ldots\
        &
        & \gate{\rxlab{R_x(\beta_p)}}
        &
        & \meter{}
    \end{quantikz}}
    \caption{$Z$ error propagated to the end of the circuit changes the signs of parameters on the mixer terms.}
    \label{fig:qaoa-a}
  \end{subfigure}
\\[0.7cm]
  \begin{subfigure}{\textwidth}
    \centering
    \resizebox{\textwidth}{!}{%
    \begin{quantikz}[row sep=0.2cm, column sep=0.3cm,
      execute at end picture={%
        \coordinate (laneB) at ([yshift=2mm]\tikzcdmatrixname-2-7.north);%
        \draw[->,thick,errcol,rounded corners=2pt]
          (\tikzcdmatrixname-2-8.north)
          -- (\tikzcdmatrixname-2-8.north |- laneB)
          -- (\tikzcdmatrixname-2-2.north |- laneB)
          -- (\tikzcdmatrixname-2-2.north);%
      }]
      \lstick{$\ket{+}$}
        &
        & \gate[4,style={fill=gray!20,thick}]{\uclab{e^{-i\gamma_1 H_C}}}
        & \gate{\rxlab{R_x(\beta_1)}}
        & \ \ldots\
        & \gate[4,style={fill=gray!20,thick}]{\uclab{e^{-i\gamma_{s} H_C}}}
        & \gate{\rxlab{R_x(\beta_{s})}}
        &
        & \ \ldots\
        & \gate[4]{\uclab{e^{i\gamma_p H_C}}}
        & \gate{\rxlab{R_x(\beta_p)}}
        & \meter{}
      \\[0.3cm]
      \lstick{$\ket{+}$}
        & \gate[style={fill=red!20}]{\sglab{X}}
        &
        & \gate{\rxlab{R_x(\beta_1)}}
        & \ \ldots\
        &
        & \gate{\rxlab{R_x(\beta_{s})}}
        & \gate[style={dashed,fill=red!20}]{\sglab{X}}
        & \ \ldots\
        &
        & \gate{\rxlab{R_x(\beta_p)}}
        & \meter{}
      \\
      \vdots
      \\
      \lstick{$\ket{+}$}
        &
        &
        & \gate{\rxlab{R_x(\beta_1)}}
        & \ \ldots\
        &
        & \gate{\rxlab{R_x(\beta_{s})}}
        &
        & \ \ldots\
        &
        & \gate{\rxlab{R_x(\beta_p)}}
        & \meter{}
    \end{quantikz}}
    \caption{$X$ error propagated to the beginning of the circuit changes the signs of parameters on the cost terms on nodes adjacent to the error.}
    \label{fig:qaoa-b}
  \end{subfigure}
  \caption{QAOA circuit diagram with a single Pauli (a) $Z$ error  or (b) $X$ error inserted after layer $s$. This shows how the errors can be propagated to the end or beginning of the circuit, flipping the signs of the parameters on the non-commuting terms.}
  \label{fig:qaoa-both}
\end{figure*}

We start by considering the simple setting where at most one Pauli error occurs within the circuit between layers. We denote QAOA with a single Pauli error $K\in\{X,Y,Z\}$ on qubit $j$ after layer $s$ as 
\begin{align}
    \tilde{U}^K_{j,s}(\gamma,\beta) = U_p \ldots U_{s+1} K_j U_s \ldots U_1.
\end{align}
We first consider the subset $K \in\{X,Z\}$. Commuting the error through one layer of QAOA flips the sign of the angle, $\beta_s$ or $\gamma_s$, of each term that anticommutes with $K$, while leaving the commuting terms unchanged
\begin{align}
    e^{i\beta X}Z &= Z e^{-i\beta X},\\
    e^{i\gamma Z\otimes Z}(X\otimes I) &= (X\otimes I)e^{-i\gamma Z\otimes Z}.
\end{align}

We write $[A,B]_+ \equiv AB+BA$ for the anti-commutator, and define $\chi_{K,P} \equiv \mathds{1}\{[K,P]_+=0\}$ for $P\in\{X,Z\}$. Propagating the Pauli through one QAOA layer gives
\begin{align}\label{big}
    U_s K_j
    &= K_j \hat{U}_s \notag \\
    &= K_j\,
        e^{\,i\beta_s (-1)^{\chi_{K,X}} X_j}\,
        e^{\,i\beta_s \sum_{i\neq j} X_i}  \\
    &\quad\times\,
        e^{\,i\gamma_s (-1)^{\chi_{K,Z}} \sum_{i\in\mathcal{N}(j)} Z_i Z_j}\,
        e^{\,i\gamma_s \sum_{(i,k)\in E'} Z_i Z_k},\notag
\end{align}
where $\mathcal{N}(j)$ denotes the set of neighbouring nodes of $j$ and $E' = \{(i,k)\in E; i,k \notin \mathcal{N}(j)\}$ the set of edges not connected to node $j$. 

It is sufficient to consider errors only between layers. An error occurring within a layer, i.e. between the cost and mixer terms, can be propagated either forward or backward in the circuit, through terms it commutes with, such that it is between layers. This is then equivalent to the error having occurred at that boundary.

Commuting the error through to the beginning or end of the circuit gives successive phase flips. If we have a Pauli $Z$ error we can propagate it to the end where we measure in the computational basis where the resulting $Z$ has no effect on the output distribution
\begin{align}
    \tilde{U}^Z_{j,s} &= U_p \ldots U_{s+1} Z_j U_s \ldots U_1\\
    & = Z_j \hat{U}_p \ldots \hat{U}_{s+1} U_s \ldots U_1\label{dragZ}.
\end{align}
If we propagate the $X$ to the beginning of the circuit, it will have the same effect on the layers it moves through. Remembering that our initial state is the plus state, $\ket{+^n}$, this is an eigenstate of $X$ and so the residual operator disappears into the initial state:
\begin{align}
    \tilde{U}^X_{j,s} &= U_p \ldots U_{s+1} X_j U_s \ldots U_1\\
    & =U_p \ldots U_{s+1} \hat{U}_s \ldots \hat{U}_1  X_j \label{dragX}.
\end{align}
Therefore, injecting a single Pauli ($X$ or $Z$) is equivalent to running QAOA with a subset of the variational parameter signs flipped plus a residual Pauli that has no effect on the resulting distribution. 

$Y$ errors can be written as $-iZX$ and so we can propagate each through as though they act individually and drop the constant phase with the $Z$.

This is not to say that only qubits with errors are affected. This propagation results in flipped parameter signs on affected and possibly adjacent nodes that allow for the analysis in the following section. However, we can see that errors propagate as a light cone and affect all other qubits in that light cone (see Appendix \ref{appendix:lightcone}.)

\section{Results}

\subsection{Parameter dependent error resilience} \label{section:theorem}

Define the distance measure $\mathcal{D}(\tilde{U},U)$ as the total variation distance (TVD) between the distributions produced from the noisy and noiseless circuits:
\begin{align}\label{tvd}
    \mathcal{D}(\tilde{U},U) =  \| |\tilde U \ket{+^n}|^2 - | U \ket{+^n}|^2 \|_\text{tvd}.
\end{align}
This distance gives a measure of how much an error will alter the distribution induced by measuring the final state and an indication of how much this will affect the quality of the result. 

The following theorem bounds this distance in terms of the variational parameters.
\begin{theorem} \label{theorem}
Let $C(x)$ be a cost function on graph $\mathcal{G}$ with edge set $E$ and $n$ nodes, where node $i$ has degree $d_i$. 

We consider a QAOA circuit $U$ with $p$ layers and parameters $\gamma= \{\gamma_i\}, \beta = \{\beta_i\}$ for $i\in\{1,\ldots,p\}$. We denote the circuit with a single Pauli error $K \in \{X,Y,Z\}$ acting on qubit $j$ after layer $s$, by $\tilde{U}_{j,s}^K$. The total variation distance (TVD) in \eqref{tvd} is bounded by
\begin{align}
\mathcal{D}(\tilde{U}^K_{j,s},U)
  \leq 2\Big[\,
    &\chi_{K,X}\sum_{i=s+1}^{p}|\sin\beta_i| \\
  + &\chi_{K,Z}\max_{\omega\in\{0,\dots,d_j\}}\sum_{i=1}^{s}|\sin(\omega\gamma_i)|
  \,\Big] \notag\\
  \leq 2\Big[\,
    &\chi_{K,X}\sum_{i=s+1}^{p}|\beta_i|
  + \chi_{K,Z}\,d_j\sum_{i=1}^{s}|\gamma_i|
  \,\Big].\label{DK}
\end{align}
\end{theorem}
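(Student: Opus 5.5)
We reduce the total variation distance to a distance between pure states and then telescope over the layers whose parameters were flipped. For pure states $\ket{\phi},\ket{\psi}$, the distribution induced by a computational basis measurement satisfies
\begin{align}
\| |\phi|^2-|\psi|^2\|_{\text{tvd}}\le \|\ket{\phi}\bra{\phi}-\ket{\psi}\bra{\psi}\|_1\le 2\|\ket{\phi}-\ket{\psi}\|,
\end{align}
with the convention that the TVD is the $\ell_1$ distance of the distributions. This convention is the one that fits the overall factor $2$ in \eqref{DK}. Global phases and residual Paulis that do not change the output distribution may be dropped.

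We handle $K=Z$ and $K=X$ separately. For $Y=-iZX$, the errors are propagated as in the previous section: the $Z$ part forwards and the $X$ part backwards. The two contributions then add by the triangle inequality, which is why both indicator terms appear.

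\emph{Case $K=Z$.} By \eqref{dragZ}, $\tilde U^Z_{j,s}\ket{+^n}$ equals, up to the final $Z_j$ which does not affect the distribution, the state $\hat U_p\cdots\hat U_{s+1}U_s\cdots U_1\ket{+^n}$. Each $\hat U_i$ differs from $U_i$ only in that $e^{i\beta_iX_j}$ is replaced by $e^{-i\beta_iX_j}$. We telescope: replace the $\hat U_i$ one at a time, for $i=s+1,\dots,p$, and use unitary invariance of the norm. The $i$-th term of the sum is then bounded by
\begin{align}
\|e^{i\beta_iX_j}-e^{-i\beta_iX_j}\| = \|2i\sin\beta_i X_j\| = 2|\sin\beta_i|.
\end{align}
To recover the stated constant rather than $4\sum|\sin\beta_i|$, the crude chain above must be sharpened. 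For this we use $\|\phi\bra{\phi}-\psi\bra{\psi}\|_1 = 2\sqrt{1-|\langle\phi|\psi\rangle|^2}\le 2\min_\theta\|\phi-e^{i\theta}\psi\|$. We also use the operator-level bound $\|e^{i\beta X}-e^{-i\beta X}\|$ modulo phase, i.e.\ the trace distance of the two unitaries acting on a state, which is at most $2|\sin\beta|$ without an extra factor. Then each flipped layer contributes $|\sin\beta_i|$ to the pure-state trace distance $\sqrt{1-|\langle\cdot\rangle|^2}$, and the sum gives the first bound.

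\emph{Case $K=X$.} By \eqref{dragX}, the noisy state is $U_p\cdots U_{s+1}\hat U_s\cdots\hat U_1\ket{+^n}$. Here $\hat U_i$ replaces $e^{i\gamma_i\sum_{k\in\mathcal N(j)}Z_kZ_j}$ by its inverse. The difference of the two operators is $2i\sin(\gamma_i A)$ with $A=\sum_{k\in\mathcal N(j)}Z_kZ_j$. Since $A$ is diagonal with eigenvalues $d_j-2m$, its operator norm is $\max_\omega|\sin(\omega\gamma_i)|$ over the attainable values $|d_j-2m|$. These lie in $\{0,\dots,d_j\}$, which is why the statement maximizes over that set. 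Telescoping over $i=1,\dots,s$ as before gives the $\chi_{K,Z}$ term.

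The main obstacle is keeping this step consistent. If we take the maximum per layer, we get $\sum_i\max_\omega$ rather than $\max_\omega\sum_i$. To obtain $\max_\omega\sum_i$ we would exploit that all the operators are diagonal in the same $A$-eigenbasis and commute. We decompose the state into $A$-eigenspaces (equivalently, condition on the eigenvalue $\omega$ of $A$, which is fixed by the $Z$-basis values of $j$ and its neighbours). Within each sector the flips act as scalar phases $e^{\pm i\omega\gamma_i}$, but the intervening mixers break this sector structure, so a careful argument is needed. If that fails, we would settle for the per-layer maximum, which still yields the second line of \eqref{DK}.

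The final inequality follows immediately from $|\sin x|\le|x|$ and $|\sin(\omega\gamma)|\le \omega|\gamma|\le d_j|\gamma|$.
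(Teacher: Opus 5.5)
Your skeleton (propagate the Pauli out of the circuit, drop the residual Pauli, telescope over the flipped layers, evaluate $\|\hat U_i-U_i\|$) is the paper's. However, your choice of normalization is wrong, and the ``sharpening'' you build on it is false.

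The paper's TVD is $\mathcal{D}=\tfrac12\sum_x|p(x)-q(x)|$ (see the proof of Lemma~\ref{lemma}). With that normalization your crude chain is already the whole proof: $\mathcal{D}\le\sqrt{1-|\braket{\phi|\psi}|^2}\le\|\ket{\phi}-\ket{\psi}\|\le\sum_{i=s+1}^{p}\|\hat U_i-U_i\|=2\sum_{i=s+1}^{p}|\sin\beta_i|$, and likewise for $K=X,Y$. Under your full-$\ell_1$ convention the statement is false, so no sharpening can rescue it.

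Your proposed fix also fails on its own terms. Phase freedom gains nothing, since $\min_\theta\|e^{2i\beta X}-e^{i\theta}I\|=2\min(|\sin\beta|,|\cos\beta|)$. For a single flip $W=e^{2i\beta X_j}$, the pure-state trace distance between $\ket{\phi}$ and $W\ket{\phi}$ is $|\sin2\beta|\sqrt{1-(\bra{\phi}X_j\ket{\phi})^2}$. This equals $|\sin2\beta|\approx2|\beta|$, not $|\sin\beta|$, whenever $\bra{\phi}X_j\ket{\phi}=0$.

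Here is an explicit counterexample under your convention. Take a single edge $(1,2)$, $p=1$, $s=0$, $K=Z$ on qubit $1$, and $\gamma_1=\pi/4$. After dropping the residual $Z_1$, the noisy state is $e^{i\beta_1(X_2-X_1)}e^{i\gamma_1Z_1Z_2}\ket{++}=e^{i\gamma_1Z_1Z_2}\ket{++}$, which is uniform in the computational basis. The noiseless state puts probability $(1+\sin4\beta_1)/4$ on each even-parity string and $(1-\sin4\beta_1)/4$ on each odd-parity string. Hence $\sum_x|p(x)-q(x)|=|\sin4\beta_1|\approx4|\beta_1|$, which exceeds $2|\sin\beta_1|$ for small $\beta_1$. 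With the paper's normalization, $\tfrac12|\sin4\beta_1|\le2|\sin\beta_1|$, as the theorem asserts.

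Your concern about $\max_\omega\sum_i$ versus $\sum_i\max_\omega$ is legitimate, and the paper's proof does not resolve it. It obtains $2\sum_{i\le s}\max_\omega|\sin(\omega\gamma_i)|$ from the per-layer norms and then writes $2\max_\omega\sum_{i\le s}|\sin(\omega\gamma_i)|$. Since $\sum_i\max_\omega\ge\max_\omega\sum_i$, that step does not follow in general. Falling back to the per-layer maximum, as you suggest, would leave the first line of \eqref{DK} unproved.

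You do not need a sector decomposition to close this gap.
\begin{itemize}
\item With the correct normalization $\mathcal{D}\le1$, so for $K=X$ or $Y$ the first line is only non-trivial when $\max_\omega\sum_{i\le s}|\sin(\omega\gamma_i)|<\tfrac12$.
\item Then $|\sin(\omega\gamma_i)|<\tfrac12$ for every $i\le s$ and every $\omega\le d_j$.
\item Taking $\omega=1$ forces $\gamma_i=k_i\pi+\alpha_i$ with $|\alpha_i|<\pi/6$, so $|\sin(\omega\gamma_i)|=|\sin(\omega\alpha_i)|$.
\item The progression $\omega|\alpha_i|$ has steps below $\pi/6$, so it cannot jump over $[\pi/6,5\pi/6]$. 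This forces $d_j|\alpha_i|<\pi/6$.
\item Hence $|\sin(\omega\gamma_i)|$ is nondecreasing in $\omega$, and every layer attains its maximum at the common value $\omega=d_j$.
\end{itemize}
Therefore $\sum_i\max_\omega|\sin(\omega\gamma_i)|=\sum_i|\sin(d_j\gamma_i)|\le\max_\omega\sum_i|\sin(\omega\gamma_i)|$. Note that this argument also relies on the $\tfrac12$ normalization.
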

We defer the proof to Appendix \ref{appendix:theorem1}.

This is an upper bound on the distance between the two circuits; this is not a performance guarantee but an indication of the robustness of a set of QAOA parameters for a given graph. Specifically, a small value certifies robustness to noise but does not guarantee optimality of the resultant noisy state and a large bound does not prove worst-case error resilience but implies it is possible. 

From this, we see the distance between the circuits in the noiseless and noisy case depends on a subset of the variational parameters as well as the density of the affected graph node, with the particular subset depending on the type of noise present. A $Z$ error accumulates the angles associated with the mixer terms that come after it, meaning it has most effect when it happens early in the algorithm. Intuitively, the final measurement is in the computational basis and therefore immune to $Z$ errors. We would expect then that a $Z$ error has a less damaging effect when it occurs close to this measurement. Conversely, an $X$ error accumulates the cost term angles that precede it and thus has worse effect when it occurs later in the algorithm. Again intuitively, we start in the $\ket{+^n}$ state which is an $X$ eigenstate so we expect it to have less impact. A $Y$ error accumulates both, so is damaging at any point in the circuit. 

This can be seen in Fig.~\ref{fig:tvd} where the distance $\mathcal{D}(\tilde{U}^K_{0,s},U)$ is plotted for $K\in\{X,Y,Z\}$ and varying $s$. It shows that the dependence of the distance on $s$ is itself strongly dependent on $K$. A $Z$ error is most damaging when it occurs early in the circuit and an $X$ error is most damaging when it occurs near the end. The distance when a $Y$ error occurs closely follows the sum of the $X$ and $Z$ distances for all $s$, as expected from \eqref{DK}.

\begin{figure}[t]
    \centering
    \includegraphics[width=0.9\linewidth]{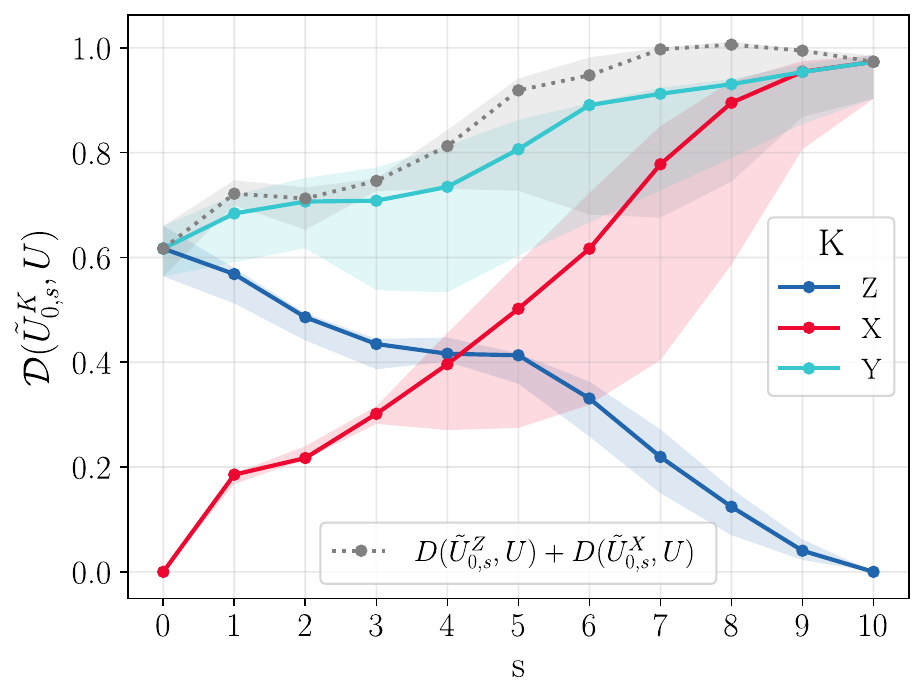}
    \caption{
     Total variation distance, $\mathcal{D}(\tilde{U}^K_{0,s},U)$, for QAOA with 10 layers and a single Pauli $K\in\{X,Y,Z\}$ injected after layer $s$ (where $s=0$ denotes the position before the first layer). We plot the median distance over 5 3-regular graphs with 10 nodes and the shaded region indicating the interquartile range. For each graph QAOA was run, using predicted parameters using the \textit{balanced method} from \cite{ieva}, without any errors then with a single Pauli injected and the distance between the distributions calculated using \eqref{tvd}. $Z$ errors are most damaging early in the circuit while $X$ errors are most damaging near the end. We also see that $\mathcal{D}(\tilde{U}^X_{0,s},U)+\mathcal{D}(\tilde{U}^Z_{0,s},U)$ closely tracks $\mathcal{D}(\tilde{U}^Y_{0,s},U)$.}
    \label{fig:tvd}
\end{figure}

The bound also implies that high degree qubits are more fragile to $X$ and $Y$ errors as the cost-term contribution picks up a $d_j$ dependence, whereas sensitivity to $Z$ errors is independent of degree. 

The size of the parameters then determines the size of the bound. Larger parameters push this upper bound above $1$ and hence we do not get a useful bound on the effect of errors. 
Smaller parameters keep the distance bounded below $1$ but reduce the expressibility of the circuit and the effectiveness of the algorithm. 

One may wonder whether this is equivalent to small angles preventing the circuit from moving far from the initial state and hence errors having little effect because of this. We provide numerical evidence in Appendix \ref{appendix:performance} that shows there is a separation between the regime where angles are too small to do anything and where the algorithm can remain error resilient. We also provide an upper bound on the distance the state can move from the initial state which gives further evidence for this separation. 

\begin{figure}[t]
    \centering
    \includegraphics[width=\linewidth]{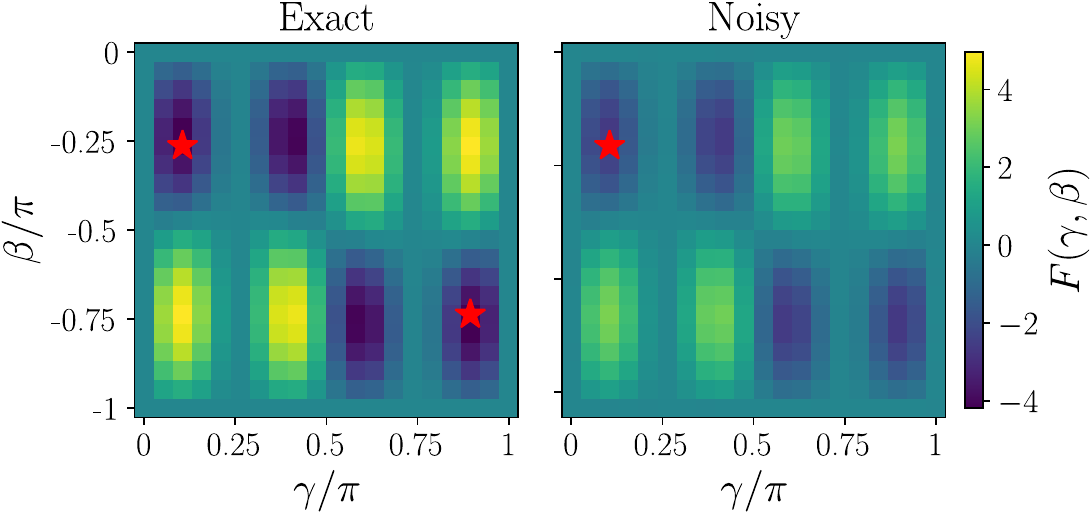}
    \caption{Noiseless and Noisy parameter landscape for 1 layer QAOA for Maxcut on 8 node 3-regular graph. Stars indicate global minima.}
    \label{fig:landscape}
\end{figure}

Fig.~\ref{fig:landscape} is an example of a $p=1$ QAOA parameter landscape for Max-Cut on an 8 node 3-regular graph. As expected, noise flattens the landscape, but also shifts the optimum toward smaller parameters, the behaviour Theorem \ref{theorem} predicts.

Practically, this gives insight into where different errors should be prioritized. Correcting $Z$ errors in early layers, $X$ in later layers and on high degree qubits helps to mitigate the effect of errors. Similarly, for variants of QAOA with independent angles for individual qubits e.g. MA-QAOA~\cite{herrman2022multi}, keeping parameters smaller on error prone or high degree qubits would increase robustness to errors. 

Although this theorem is stated for single Paulis, this can be extended to multiple errors simply by propagating all errors outwards such that the residual Paulis can be dropped. The distance is then bounded by the sum of all the phase flips caused by all propagations. 

Generalizing to the Pauli noise model as in \eqref{Pauli}, we define $\mathcal{E}(\tilde{U}^K_{j,s} ,U)$ as this upper bound on the distance
\begin{align}\label{upperbound}
    \mathcal{E}(\tilde{U}^K_{j,s} ,U)&=\chi_{K,X}\left[ 2\sum_{i = s+1}^{p}|\beta_i|\right] + \chi_{K,Z} \left[ 2d_j\sum_{i = 1}^{s}|\gamma_i|\right].
\end{align}
We find an upper bound on the expected distance by taking the expectation of \eqref{upperbound} for this error model, writing $\epsilon_{xy} = \epsilon_{x}+ \epsilon_{y}$ and $ \epsilon_{zy}=\epsilon_{y}+\epsilon_{z}$
\begin{align}
    \mathcal{E}(\gamma,\beta) &= \mathbb{E}_{\text{errors}}\left[\mathcal{E}(\tilde{U}^K_{j,s} ,U)\right]\\
    &=2n\sum_{i=1}^p
    \left[\epsilon_{zy}i|\beta_i| + 
    \epsilon_{xy}(p-i+1)d_{\text{avg}}|\gamma_i|\right].\label{mathcalE}
\end{align}
From this, we can infer which sets of parameters will provide more stability against general errors. The parameters in the final layers have more weight for $Z$ errors as it is always the parameters after the error that add to the distance. The converse is true for $X$ errors. This points toward ramping schedules, where $\gamma_i$ start small and increase and $\beta_i$ begin large in magnitude and decrease, which resemble adiabatic schedules and inherently suppress $\mathcal{E}(\gamma,\beta)$.

This gives us a quantity to measure the resilience of a circuit with parameters $(\gamma,\beta)$ to Pauli noise. The aim would then be to minimize this quantity while also minimizing the energy of the resulting state to balance the quality of the solution with robustness to noise. 

\subsubsection{Numerical results}

\begin{figure*}[t]
    \centering
    \includegraphics[width=\linewidth]{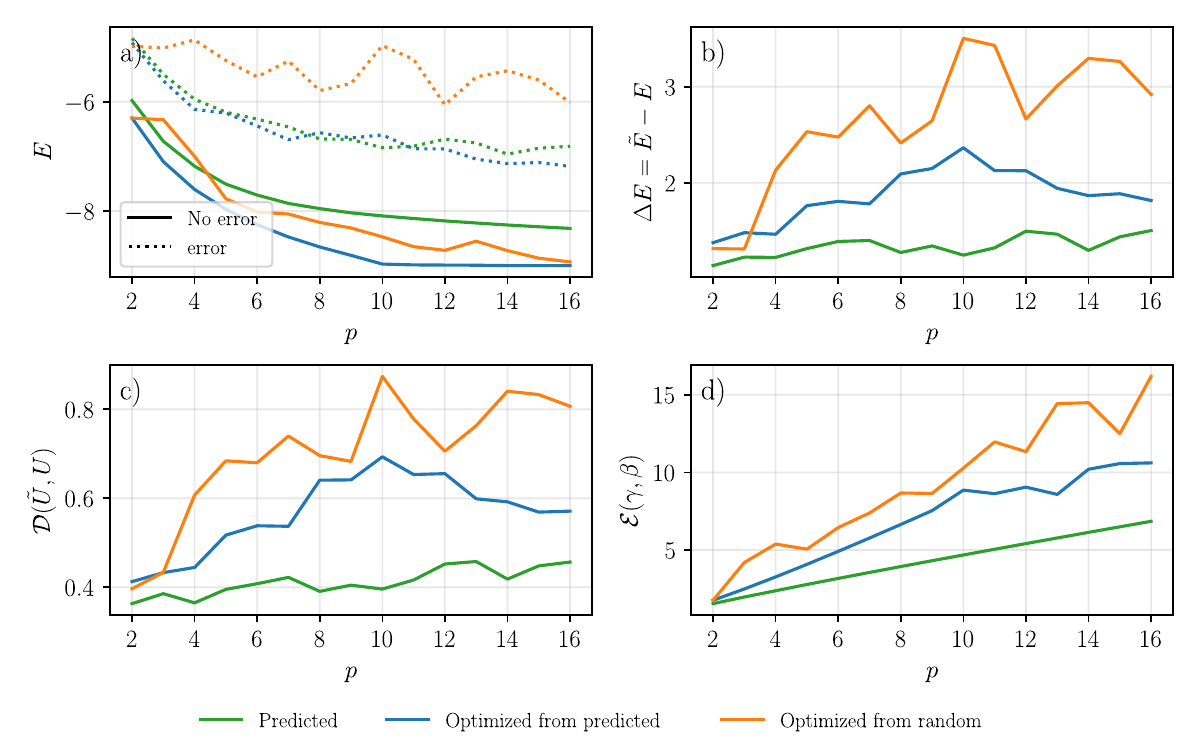}
    \caption{a) Noisy and noiseless energy, b) energy difference, c) distance and d) upper bound on expected distance for Max-Cut on a 10 node line graph using 3 different parameter generation methods: predicted using \textit{SKatan Method} from~\cite{ieva}; optimized starting from the predicted parameters; and optimized from a random starting point. Both optimizations were done offline, in exact simulation using LBFGS \cite{lbfgs}. Single Pauli $X$ and $Z$ were injected uniformly at a random layer and random qubit in the circuit and the results were averaged over 200 circuit runs.}
    \label{fig:dist}
\end{figure*}

To determine whether this upper bound gives useful insight into the performance of different choices of parameters, we compared 3 different choices of parameters for Max-Cut on a 10 node line graph for various numbers of QAOA layers $p$ and evaluated the noise resilience through the quantity $\mathcal{E}(\gamma,\beta)$. The predicted parameters are small, smooth, and monotonic whereas those optimized from random are large, jagged, and irregular. Noise was simulated using randomly injected Paulis after each QAOA layer.

The results in Fig.~\ref{fig:dist} show that not all parameters are equally error resilient. Though the parameters optimized from random perform reasonably well without any noise, they are the worst in the noisy case. Conversely, the predicted parameters perform worst in the noiseless case but perform similarly to optimized parameters in the noisy regime, perhaps suggesting optimization offers little benefit if noise is not taken into account. The difference in energy is tracked by the TVD \eqref{tvd}, justifying this as a good quantity to bound. We see a qualitative correlation between the upper bound on the distance and the actual distance. The predicted parameters maintain a much smaller distance from the noiseless state and the expected upper bound predicts this. Quantitatively, the bound is loose, but its qualitative agreement with $\mathcal{D}(\tilde U,U)$ and $\Delta E$ makes it a good qualitative indicator of relative noise resilience across the methods.

\subsection{Noise-informed parameter selection}\label{section:choosing}

Building on results in the previous section, we set out to devise methods for generating parameters to increase noise resilience and to determine which sets of existing parameter generation methods perform best in the noisy regime. In this section we will firstly present results showing that, depending on the noise rate, predetermined parameters can be more stable to noise than methods which require optimization. We will compare the stability of existing methods of parameter generation and introduce a noise-informed parameter optimization that we find to provide marginal improvements in results in the presence of noise. These preliminary results act as a proof of concept for future work exploring how to make algorithms robust in the NISQ era. Though this paper explores QAOA explicitly, it is possible that these results extend to more general VQAs.

\subsubsection{Stability of Predetermined Parameters}

We look at the approximation ratio achieved by different methods of parameter generation with varying rates of noise in the circuit. We define the Max-Cut approximation ratio 
\begin{align}
    AR(\gamma,\beta) = \frac{\langle C_{\text{cut}} \rangle_{\gamma,\beta}}{C^*_{\text{cut}}} =\frac{W-F(\gamma,\beta)}{W -C^*}.
\end{align}
where the optimal costs $C^*$ and $C^*_{\text{cut}}$ are defined above and $W = \sum_{(i,j)\in E}w_{ij}$.

From Fig.~\ref{fig:noise_rates} we see that at certain noise rates, optimizing parameters without noise is less effective than using these precomputed parameters or optimizing with noise. At low rates of noise, parameters optimized in exact emulation achieve the best approximation ratio across these graphs but beyond $\epsilon \approx 0.01$, optimizing either with knowledge of the noise or without does not improve the approximation ratio one could achieve by taking precomputed parameters without any optimization. This is significant as optimization can be costly, especially on device, therefore if it does not give advantage over precomputed parameters, skipping optimization altogether would yield notable cost savings.

\begin{figure}
    \centering
\includegraphics[width=\linewidth]{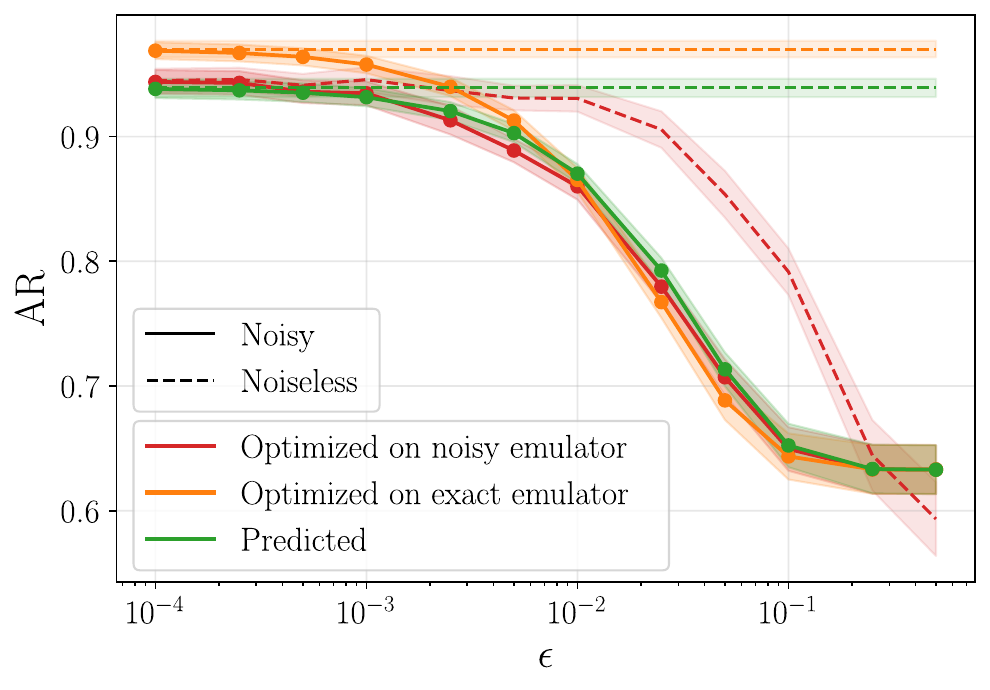}
    \caption{Mean approximation ratio achieved by QAOA with $p=4$ for Max-Cut on 6 Erd\H{o}s -- R\'enyi graphs with 6 nodes and Pauli noise channels inserted in the circuit with various noise rates. Three methods of parameter generation were employed: optimizing in exact emulation, optimizing in noisy emulation, and using unoptimized predicted parameters. Beyond a certain noise rate we see optimizing parameters offline is not beneficial. The shaded region indicates the standard error of the mean.}
    \label{fig:noise_rates}
\end{figure}

\subsubsection{Resilient Parameter optimization}\label{section:penalty}

Given the insight from Theorem ~\ref{theorem}, that smaller parameters are naturally more resilient to noise, we can attempt to encourage robustness by introducing penalties that give preference to smaller parameters.

We can add a penalty to the objective function based on the expected value of the upper bound on the distance between noisy and noiseless circuits in \eqref{mathcalE}. The parameter optimization problem then becomes
\begin{align}\label{penalty}
    (\gamma^*,\beta^*) = \argmin _{\gamma,\beta} \bigg[F(\gamma,\beta) + \lambda \;\mathcal{E}(\gamma,\beta) \bigg].
\end{align}
The penalty weight $\lambda$ should be chosen to balance the quality of the solution with the resilience of the parameters. 

Alternatively, since this penalty adds complexity to the optimization process that may be costly, a simpler way to restrict the optimizer to smaller parameters is to truncate the search space. For example, adding the constraints $0\leq\gamma_i\leq\frac{\pi}{2}$ and $-\frac{\pi}{2}\leq\beta_i\leq0$ for all $i$, bounds \eqref{mathcalE} by
\begin{align}
    \mathcal{E}(\gamma,\beta) \leq \frac{np\pi}{2}(p+1)(\epsilon_{zy}+d_{\text{avg}}\epsilon_{xy}).
\end{align}
Thus the optimization problem becomes
\begin{align}\label{truncation}
    (\gamma^*,\beta^*) = \argmin _{\gamma\in[0,\frac{\pi}{2}],\beta\in[-\frac{\pi}{2},0]}F(\gamma,\beta).
\end{align}
This constraint is simple to implement, but risks excluding a global minimum outside the bounded parameter space. 

The periodicity of QAOA parameters varies by problem type and hence the values for truncation may vary. Theorem 2 in \cite{lee2023depth} gives the periodicity of QAOA for unweighted Max-Cut as $2\pi$ in $\gamma$ and $\pi/2$ in $\beta$. Differing conventions in the mixer and cost Hamiltonians mean this translates to a period of $\pi$ in both $\gamma$ and $\beta$ here. Therefore the truncation in \eqref{truncation} shrinks each parameter space by half.

\begin{figure*}[t]
    \centering
    \includegraphics[width=0.9\linewidth]{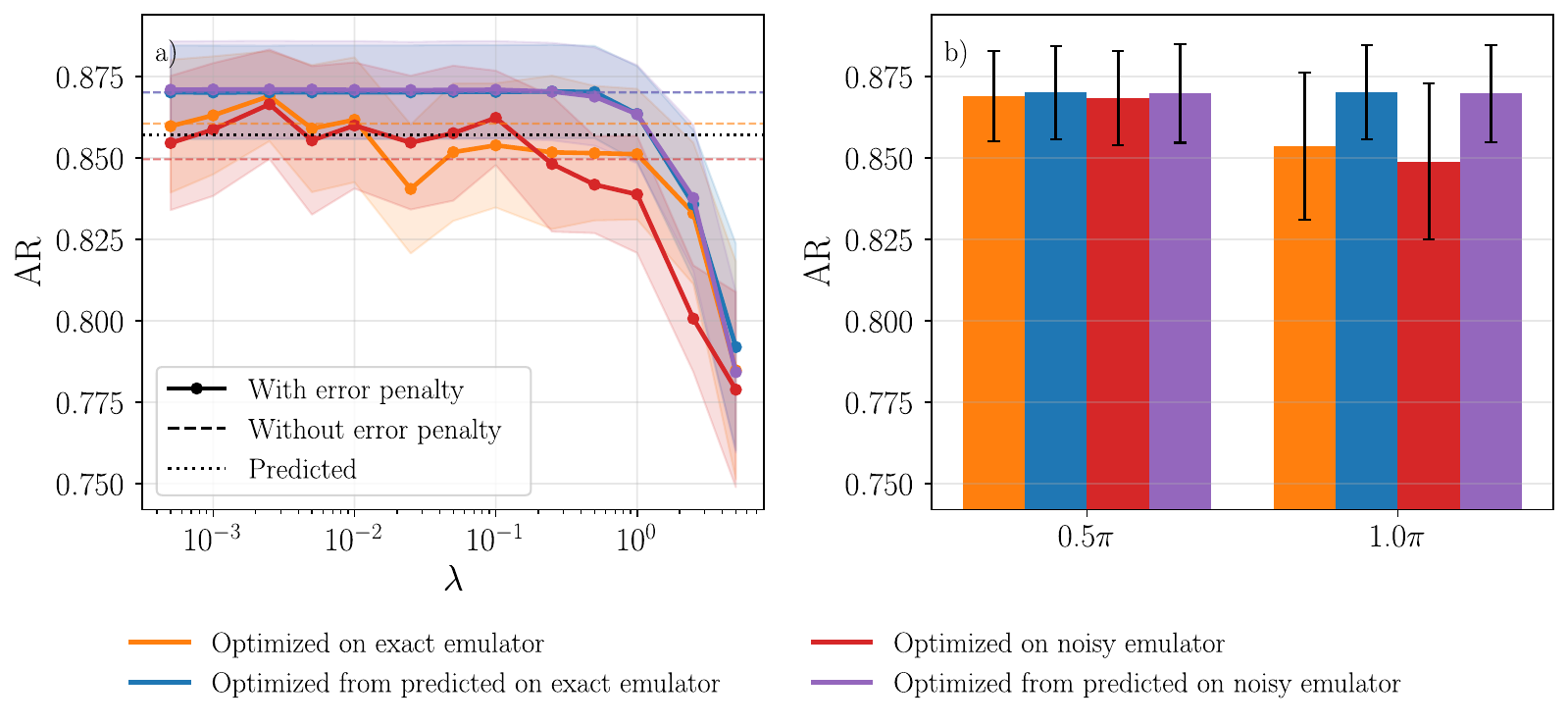}
    \caption{Noisy approximation ratios achieved by QAOA with parameters generated using noise-informed methods.
    QAOA  with $p=4$ was used for Max-Cut on 5 graphs with 6 nodes (2 Erd\H{o}s -- R\'enyi, 1 line, 2 3-regular) and these results are the average noisy approximation ratios over the graphs. Noise was injected using Pauli channels with error probability $\epsilon=0.01$ to calculate the resulting AR for each method. These plots show the mean over the 5 graphs and the shaded region show the standard error of the mean. Parameter optimization was done either using a noisy or exact emulator and starting from both random and precomputed parameters. The precomputed parameters were calculated using the \textit{balanced method} from \cite{ieva}.
    a) shows results from parameters optimized using the cost function in \eqref{penalty} with varying $\lambda$ values. b) shows results from parameters optimized using the cost landscape truncation in \eqref{truncation}, where the optimizer was restricted to search parameters with absolute value up to $\pi$ and $\pi/2$. Results for each graph are given in Appendix ~\ref{appendix:penalty}.}
    \label{fig:penalty-plot-avg}
\end{figure*}

Both of these techniques were tested on 5 graphs with 6 nodes using 4 different optimization methods. Two methods in exact emulation with LBFGS \cite{lbfgs} optimizer and two optimized with SPSA \cite{spsa} with noise injected in the circuit via Pauli channels with probability $\epsilon=0.01$. 

The average results are shown in Fig.~\ref{fig:penalty-plot-avg}, full results for each graph can be found in Appendix ~\ref{appendix:penalty}. The results on the effectiveness of these techniques are inconclusive. While there is marginal improvement for some values of $\lambda$ for methods not optimizing from predicted parameters, results are not consistent enough to draw conclusions on whether this will help improve results under noise in general. This has also only been tested on very small graphs due to costly noisy emulation so there is no indication of how this would scale.

\subsection{Adiabatic limit of QAOA}\label{section:adiabatic}

The sets of parameters that best minimize \eqref{mathcalE} while still maintaining good performance, are those that resemble adiabatic schedules. That is, the cost parameter ramps up and the mixer parameter ramps down. This inspired an extension of this analysis to the QAA to gain insight into the error resilience of the QAA, or into potential QAOA parameter schedules that perform well in the presence of errors.

We approximate the QAA using QAOA by taking the limit $p\rightarrow\infty$. Here, the layer index becomes a continuous time, $i\rightarrow t \in [0,T]$ and the discrete parameters become functions of time $\gamma_i \rightarrow \gamma(t)$, $\beta_i \rightarrow \beta(t)$. In the standard QAA, as in \eqref{sym}, there is a single schedule $s(t)$ such that $\gamma(t)=1-\beta(t)=s(t)$. Using this, \eqref{mathcalE} becomes
\begin{align}
    \mathcal{E}(s)
    &\approx 2n\int_0^T\epsilon_{zy}t|1-s(t)| + \epsilon_{xy}(T-t)d_{\text{avg}}|s(t)|\,\mathrm{d}t\label{expected}.
\end{align}
We want to find whether this is bounded, in order to learn about the intrinsic error resilience of these schedules.

More generally we want to know if we can do better than this with any schedule. The simplest way to do this is to assume unphysical schedules to upper and lower bound this such that any physical schedule sits between them.

Assuming $0\leq s(t) \leq 1$ and $ds/dt\geq0$, we bound this quantity by choosing extremal values for $s(t)$.

$\mathcal{E}(s)$ is maximized by the constant schedules,
$s(t)=0$ or $s(t)=1$, whichever is larger depending on the error rates. If instead $s(t)$ is a step function with a step at $t=\frac{\epsilon_{xy}d_\text{avg}T}{\epsilon_{xy}d_\text{avg}+\epsilon_{zy}}$, evaluating \eqref{expected} gives a lower bound. Therefore, we have 
\begin{align}
nT^2 \cdot
\frac{\epsilon_{xy}\epsilon_{zy}d_{\text{avg}}}
     {\epsilon_{zy} + \epsilon_{xy}d_{\text{avg}}}
\leq
\mathcal{E}(s)
\leq
nT^2 \cdot\max\{\epsilon_{xy}d_{\text{avg}},\, \epsilon_{zy}\}
\end{align}
These are unphysical schedules that lower bound $\mathcal{E}(s)$ such that a physical schedule can only do worse. This gives it a fixed floor that scales with $nT^2$, to suppress this below $\mathcal{O}(1)$ requires error rates $\epsilon\lesssim 1/(nT^2)$ which is small once $T$ is fixed by the adiabatic runtime $\sim\sqrt{N}$. This indicates a fixed resilience to errors that cannot be improved if we maintain a single adiabatic schedule. 

To attempt to beat the lower bound on $\mathcal{E}(s)$, we introduce two independent schedules. Instead of $s(t)$ in \eqref{sym}, we have $\gamma(t),\beta(t)$
\begin{align}
    H(t) = \gamma(t) H_{\text{final}} + \beta (t) H_{\text{initial}}.
\end{align}
The symmetric constraint $\gamma(t) = 1-\beta(t)$ limited how small \eqref{expected} could get, since demanding $\gamma = \beta$ at the crossing forces the schedules to meet at a crossing point $\delta = 1/2$. Using individual schedules $\gamma(t)$ and $\beta(t)$, removes this restriction and $\delta$ can be pushed below $1/2$ such that the integral is reduced. Therefore, instead of \eqref{expected} we get
\begin{align}
    \mathcal{E}_{\text{cont}}(\gamma,\beta)
    &\approx 2n\int_0^T\epsilon_{zy}t|\beta(t)| + \epsilon_{xy}(T-t)d_{\text{avg}}|\gamma(t)|\,\mathrm{d}t\label{expected2}.
\end{align}
Again, by assuming unphysical schedules that do not obey the adiabatic constraints, this bound can go to zero, namely individual step functions where $\beta(0)=1$ but $\beta(t>0)=0$ and $\gamma(0\leq t<T)=0$ but $\gamma(T)=1$. 

This frees us from the lower bound placed on symmetric schedules. However, since we assumed unphysical schedules it does not guarantee a physically possible schedule with a better achievable $\mathcal{E}_{\text{cont}}(\gamma,\beta)$ exists. 

This raises the question of whether the two schedule approach has better inherent error performance than the single schedule when physical constraints are imposed, in particular, when we decrease the crossing point $\delta$.

To investigate this we use the adiabatic theorem to examine physically realistic schedules via the example problem of unstructured quantum search. This work follows a similar structure to~\cite{roland2002quantum} and~\cite{isermann2021optimal}, but with two schedules. 

We find that when we build in physically realistic constraints on the schedules and runtime, it is not possible to get significantly better error behaviour. Physical constraints limit the speed at which the schedules can change and hence determine the total runtime of the algorithm. Forcing the schedules smaller increases the runtime. Increasing the duration of the algorithm has more impact on $\mathcal{E}$ than shrinking the schedules and hence we find worse error behaviour.

\subsubsection{Unstructured quantum Search}

Quantum search for element $m$ using the adiabatic evolution in \eqref{schro} is given by the Hamiltonian
\begin{align}\label{ham}
    H(t) = \gamma(t) H_m + \beta (t) H_0,
\end{align}
where
\begin{align}
    H_m &= I - \ket{m}\bra{m}\\
    H_0 &= I - \ket{+^n}\bra{+^n}.\label{H0}
\end{align}
We impose on the two schedules, assumptions generally made for $s(t)$. Firstly, boundary conditions $\gamma(0) = 0$, $\beta(0) = 1$, $\gamma(T) = 1$,$\beta(T) = 0$. Secondly, the monotonicity of each schedule, meaning they cross at a single point where $\gamma = \beta = \delta$ and that they cannot both vanish simultaneously. Finally, that $\gamma(t)$ and $\beta(t)$ are independent, in particular $\gamma \neq 1 - \beta$. Looking at the form of \eqref{expected2}, and taking inspiration from typical adiabatic schedules, we might assume that a standard adiabatic schedule, with the crossing point $\delta$ pushed lower, would increase the resilience to errors. 

Under these assumptions, for quantum search with 2 schedules the gap $g(t)$ is 
\begin{align}
    g(t)  = \sqrt{(\beta(t)-\gamma(t))^2+\frac{4\beta(t)\gamma(t)}{N}},
\end{align}
where $N = 2^n$ and the numerator on the left hand side of \eqref{localadiabatic} becomes 
\begin{align}
    \big|\bra{E_1;t}\frac{dH}{dt}\ket{E_0;t}\big|=\frac{\sqrt{N-1}}{Ng(t)}|\frac{d\gamma}{dt}\beta - \frac{d\beta}{dt}\gamma|,
\end{align}
the proofs of which are in Appendix ~\ref{appendix:Adiabatic}.

So, using the monotonicity condition, \eqref{localadiabatic} becomes
\begin{equation}
    \frac{d\gamma}{dt}\beta - \frac{d\beta}{dt}\gamma \leq \eta\sqrt{N}\left[(\gamma - \beta)^2 + \frac{4\gamma\beta}{N}\right]^{3/2},
    \label{constraint}
\end{equation}
giving the condition for local adiabatic evolution for \eqref{ham}. Using this, we determine a bound on the total time of the algorithm and therefore bound $\mathcal{E}_{\text{cont}}(\gamma,\beta)$ to evaluate the noise resilience of the two schedule case for quantum search. 

We split the analysis at $(\gamma-\beta)^2= 4\gamma\beta/N$, treating the regime where
$\gamma$ and $\beta$ are far apart separately from the regime near the crossing point $\delta$. 

In the crossing region $(\gamma-\beta)^2 < 4\gamma\beta/N$, and the $\frac{4\gamma\beta}{N}$ term dominates \eqref{constraint} and we get
\begin{align}
     \frac{d\gamma}{dt}\beta-\frac{d\beta}{dt}\gamma \leq \sqrt{N}\eta\bigg(\frac{4\gamma\beta}{N}\bigg)^{\frac{3}{2}}= \frac{8\eta(\gamma\beta)^{\frac{3}{2}}}{N}.
\end{align}

The region is defined by $|\gamma - \beta| < 2\sqrt{\gamma\beta}/\sqrt{N}$, which near the crossing point spans $-2\delta/\sqrt{N} \lesssim \gamma - \beta \lesssim 2\delta/\sqrt{N}$, giving a width
$\Delta \lesssim 4\delta/\sqrt{N}$. At the crossing $\gamma=\beta=\delta$, the rate at which the schedule traverses the region is bounded
\begin{equation}
    \frac{d\gamma}{dt}-\frac{d\beta}{dt}\leq\frac{8\eta\delta^2}{N}.
    \label{rate}
\end{equation}
which in fact holds throughout the crossing region (proof in Appendix ~\ref{appendix:rate_crossing}). The time spent in this region is therefore 
\begin{equation}
    T_{\text{cross}}\gtrsim \frac{\Delta}{\max(\frac{d\gamma}{dt} - \frac{d\beta}{dt})}\approx \frac{4\delta/\sqrt{N}}{8\eta\delta^2/N}
    = \frac{\sqrt{N}}{2\eta\delta}.
\end{equation}

We can lower bound the total time $T$ by the time spent in the crossing region $T_{\text{cross}}$. The total runtime therefore is $T \geq T_{\text{cross}}= \Omega\left(\frac{\sqrt{N}}{\eta\delta}\right)$.

We now lower bound \eqref{expected2} by evaluating it within this region. Let $[A,B]$ be the time interval during which the schedule is within the crossing region, so that $B-A=T_{\text{cross}}$. The crossing region width means $\gamma, \beta \geq \delta(1 - 2/\sqrt{N})$. Restricting \eqref{expected2} to $[A,B]$ and using this lower bound gives
\begin{align}
    \mathcal{E}_{\text{cont}}(\gamma,\beta)
    &\gtrsim 2n\delta\left(1 - \frac{2}{\sqrt{N}}\right) \int_A^B \epsilon_{zy}t + \epsilon_{xy}d_{\text{avg}}(T-t)\mathrm{d}t .
\end{align}
Writing $B^2 - A^2 = (A+B)T_{\text{cross}}$, we bound both terms in the integral separately
\begin{align}
    \int_A^B \epsilon_{zy}\,t\,\mathrm{d}t
    &= \tfrac{1}{2}\epsilon_{zy}(A+B)\,T_{\text{cross}}
    \geq \tfrac{1}{2}\epsilon_{zy}\,T_{\text{cross}}^2 ,\\
    \int_A^B \epsilon_{xy}d_{\text{avg}}(T-t)\,\mathrm{d}t
    &= \epsilon_{xy}d_{\text{avg}}\big[T-\tfrac{1}{2}(B+A)\big]\,T_{\text{cross}},\\&
    \geq \tfrac{1}{2}\epsilon_{xy}d_{\text{avg}}\,T_{\text{cross}}^2 ,
\end{align}
where the first inequality uses $A+B \geq B-A = T_{\text{cross}}$, and the second uses $(T-A)+(T-B) \geq T-A \geq B-A = T_{\text{cross}}$ since $B \leq T$. Together these give
\begin{align}
    \mathcal{E}_{\text{cont}}
    &\gtrsim n\delta\left(1 - \frac{2}{\sqrt{N}}\right)\left(\epsilon_{zy} + \epsilon_{xy}d_{\text{avg}}\right) T_{\text{cross}}^2 .
\end{align}
Substituting $T_{\text{cross}} \gtrsim \frac{\sqrt{N}}{2\eta\delta}$ then gives
\begin{align}
    \mathcal{E}_{\text{cont}}
    &= \Omega\!\left( \frac{nN}{\eta^2\delta}\left(\epsilon_{zy} + \epsilon_{xy}d_{\text{avg}}\right)\right).
\end{align}
This bound grows as $\delta$ decreases. A smaller crossing point lengthens $T_{\text{cross}}$ and the $T_{\text{cross}}^2$ dependence of $\mathcal{E}(\gamma,\beta)$ means the net effect is worse the smaller $\delta$ is.

This intuitively makes sense; a longer runtime $T$ with fixed noise rates means more errors. This is analogous to increasing the number of layers $p$ in a QAOA circuit so we would expect more errors.

\section{Conclusion}

We find that the choice of variational parameters in the QAOA affects the stability of the circuit in the presence of noise. In fact, using a Pauli error model with single injected Pauli errors, we derive a bound on how far the noisy output state is from the desired noiseless state based on the sum of the parameters. This allows us in turn to compare the stability of methods for generating parameters as well as adding penalties to the optimization method in order to construct an error resilient circuit.

We find that optimizing without consideration of noise (which can be done to avoid costly noisy optimization on device \cite{streif2020training,pelofske2024scaling,maciejewski2026quantum}) can perform worse in the noisy case than unoptimized parameters.

In particular, beyond a certain noise rate precomputed parameters can perform better than optimization methods. Verifying that this scales and is applicable to real hardware noise is left to future work. However, if this shows that optimizing is not beneficial, using precomputed parameters would save optimization costs.

Further work is needed to verify the methods presented here for noise-informed parameter generation that attempt to manufacture error resilience are general and effective and scale beyond small instances. The preliminary results show simply optimizing starting from predicted parameters is the most effective method in this case. 

The stability of non-optimized parameters comes primarily from their shape as the adiabatic-looking schedules -- monotonically increasing $|\gamma|$ and decreasing $|\beta|$ -- tend to minimize the bound and exhibit more resilience to errors. This led us to investigate the adiabatic algorithm in the context of our bound to see if we gain any insight about which are the best QAOA schedules. We found that the single schedule adiabatic algorithm has a fixed floor on how error resilient it can be based on our bound. In an attempt to remove this lower bound we derive a 2 schedule adiabatic algorithm and show this does not help in improving the error resilience of the algorithm.

The bound \eqref{mathcalE} shows us that when the most connected qubits (high $d_j$) have errors, the impact is worse. This may be combated by an MA-QAOA style parameter shrinking on these qubits to counteract errors on them, essentially pre-preparing them for noise. Or similarly, a form of device-dependent parameter scaling may be possible. For example, if certain qubits on device are known to have high error rates, having individual parameters for these and scaling them down to reduce the impact would likely improve the error resilience of the circuit. The question of how this affects performance is open.

Furthermore, the structure of the derivation of this bound means it may be possible to extend this to other VQAs. This bound works because we have two anticommuting alternating operators and errors in the same anticommuting basis, therefore any similarly structured ansatz may have similar behaviour.

Finally, having found that parameters generated without optimization often perform better than parameters optimized without noise, it is an open question as to whether a method of noise-informed parameter prediction using insights from this bound would improve this even further. 

\section{Acknowledgments}

We would like to thank Ieva \v{C}epait\.{e} and Ra\'ul Garc\'ia-Patr\'on for helpful discussions on the topic of this paper, as well as Niam Vaishnav for assistance with numerical implementations. We also thank Sheila P\'erez Garc\'ia, Josh Blake, Ioannis Kolotouros, Elizabeth Crosson and other members of the Phasecraft team for helpful feedback on these results. 

We acknowledge the use of large language models (Claude Opus 4.8 and 5) to assist with numerics, to check proofs devised by the authors, and to refine the writing of this paper. We verified all AI-assisted outputs and take full responsibility for all content.

\bibliographystyle{apsrev4-2}
\bibliography{refs}

@book{korte2011combinatorial,
  title={Combinatorial optimization: Theory and Algorithms},
  author={Korte, Bernhard H and Vygen, Jens},
  volume={1},
  year={2011},
  publisher={Springer}
}

@article{farhi,
  title={A quantum approximate optimization algorithm},
  author={Farhi, Edward and Goldstone, Jeffrey and Gutmann, Sam},
  journal={arXiv preprint arXiv:1411.4028},
  year={2014}
}

@book{nielsen2010quantum,
  title     = {Quantum Computation and Quantum Information:
               10th Anniversary Edition},
  author    = {Nielsen, Michael A. and Chuang, Isaac L.},
  year      = {2010},
  publisher = {Cambridge University Press},
  address   = {Cambridge}
}

@article{qaa,
  title={Quantum computation by adiabatic evolution},
  author={Farhi, Edward and Goldstone, Jeffrey and Gutmann, Sam and Sipser, Michael},
  journal={arXiv preprint quant-ph/0001106},
  year={2000}
}

@article{stilck2021limitations,
  title={Limitations of optimization algorithms on noisy quantum devices},
  author={Stilck Fran{\c{c}}a, Daniel and Garcia-Patron, Raul},
  journal={Nature Physics},
  volume={17},
  number={11},
  pages={1221--1227},
  year={2021},
  publisher={Nature Publishing Group UK London}
}

@article{gonzalez2022error,
  title={Error propagation in NISQ devices for solving classical optimization problems},
  author={Gonz{\'a}lez-Garc{\'\i}a, Guillermo and Trivedi, Rahul and Cirac, J Ignacio},
  journal={PRX Quantum},
  volume={3},
  number={4},
  pages={040326},
  year={2022},
  publisher={APS}
}

@article{roland2002quantum,
  title={Quantum search by local adiabatic evolution},
  author={Roland, J{\'e}r{\'e}mie and Cerf, Nicolas J},
  journal={Physical Review A},
  volume={65},
  number={4},
  pages={042308},
  year={2002},
  publisher={APS}
}

@article{isermann2021optimal,
  title={On the optimal schedule of adiabatic quantum computing},
  author={Isermann, Stefan},
  journal={Quantum Information Processing},
  volume={20},
  number={9},
  pages={300},
  year={2021},
  publisher={Springer}
}

@article{twirling3,
  title = {Randomized Compiling for Scalable Quantum Computing on a Noisy Superconducting Quantum Processor},
  author = {Hashim, Akel and Naik, Ravi K. and Morvan, Alexis and Ville, Jean-Loup and Mitchell, Bradley and Kreikebaum, John Mark and Davis, Marc and Smith, Ethan and Iancu, Costin and O'Brien, Kevin P. and Hincks, Ian and Wallman, Joel J. and Emerson, Joseph and Siddiqi, Irfan},
  journal = {Phys. Rev. X},
  volume = {11},
  issue = {4},
  pages = {041039},
  numpages = {12},
  year = {2021},
  month = {Nov},
  publisher = {American Physical Society},
  doi = {10.1103/PhysRevX.11.041039},
  url = {https://link.aps.org/doi/10.1103/PhysRevX.11.041039}
}

@article{twirling,
  title = {Noise tailoring for scalable quantum computation via randomized compiling},
  author = {Wallman, Joel J. and Emerson, Joseph},
  journal = {Phys. Rev. A},
  volume = {94},
  issue = {5},
  pages = {052325},
  numpages = {9},
  year = {2016},
  month = {Nov},
  publisher = {American Physical Society},
  doi = {10.1103/PhysRevA.94.052325},
  url = {https://link.aps.org/doi/10.1103/PhysRevA.94.052325}
}

@article{twirling2,
  title={Constructing smaller Pauli twirling sets for arbitrary error channels},
  author={Cai, Zhenyu and Benjamin, Simon C},
  journal={Scientific reports},
  volume={9},
  number={1},
  pages={11281},
  year={2019},
  publisher={Nature Publishing Group UK London}
}

@article{VQA,
  title={Variational quantum algorithms},
  author={Cerezo, Marco and Arrasmith, Andrew and Babbush, Ryan and Benjamin, Simon C and Endo, Suguru and Fujii, Keisuke and McClean, Jarrod R and Mitarai, Kosuke and Yuan, Xiao and Cincio, Lukasz and others},
  journal={Nature Reviews Physics},
  volume={3},
  number={9},
  pages={625--644},
  year={2021},
  publisher={Nature Publishing Group UK London}
}

@article{BarrenPlateaus,
  title={Noise-induced barren plateaus in variational quantum algorithms},
  author={Wang, Samson and Fontana, Enrico and Cerezo, Marco and Sharma, Kunal and Sone, Akira and Cincio, Lukasz and Coles, Patrick J},
  journal={Nature communications},
  volume={12},
  number={1},
  pages={6961},
  year={2021},
  publisher={Nature Publishing Group UK London}
}

@article{BP,
  title={Barren plateaus in variational quantum computing},
  author={Larocca, Martin and Thanasilp, Supanut and Wang, Samson and Sharma, Kunal and Biamonte, Jacob and Coles, Patrick J and Cincio, Lukasz and McClean, Jarrod R and Holmes, Zo{\"e} and Cerezo, Marco},
  journal={Nature Reviews Physics},
  pages={1--16},
  year={2025},
  publisher={Nature Publishing Group UK London}
}

@book{lidarandbrun,
  title={Quantum error correction},
  author={Lidar, Daniel A and Brun, Todd A},
  year={2013},
  publisher={Cambridge university press}
}

@article{leo,
  title={Quantum approximate optimization algorithm: Performance, mechanism, and implementation on near-term devices},
  author={Zhou, Leo and Wang, Sheng-Tao and Choi, Soonwon and Pichler, Hannes and Lukin, Mikhail D},
  journal={Physical Review X},
  volume={10},
  number={2},
  pages={021067},
  year={2020},
  publisher={APS}
}

@article{warmstarts,
  title = {Variational Quantum Simulation: A Case Study for Understanding Warm Starts},
  author = {Puig, Ricard and Drudis, Marc and Thanasilp, Supanut and Holmes, Zo\"e},
  journal = {PRX Quantum},
  volume = {6},
  issue = {1},
  pages = {010317},
  numpages = {39},
  year = {2025},
  month = {Jan},
  publisher = {American Physical Society},
  doi = {10.1103/PRXQuantum.6.010317},
  url = {https://link.aps.org/doi/10.1103/PRXQuantum.6.010317}
}

@article{nphard,
  title={On the computational complexity of combinatorial problems},
  author={Karp, Richard M.},
  journal={Networks},
  volume={5},
  number={1},
  pages={45--68},
  year={1975},
  publisher={Wiley Online Library}
}

@article{ieva,
  title={Quantum-Enhanced Optimization by Warm Starts},
  author={{\v{C}}epait{\.e}, Ieva and Vaishnav, Niam and Zhou, Leo and Montanaro, Ashley},
  journal={arXiv preprint arXiv:2508.16309},
  year={2025}
}

@article{NISQ,
  doi = {10.22331/q-2018-08-06-79},
  url = {https://doi.org/10.22331/q-2018-08-06-79},
  title = {Quantum {C}omputing in the {NISQ} era and beyond},
  author = {Preskill, John},
  journal = {{Quantum}},
  issn = {2521-327X},
  publisher = {{Verein zur F{\"{o}}rderung des Open Access Publizierens in den Quantenwissenschaften}},
  volume = {2},
  pages = {79},
  month = aug,
  year = {2018}
}

@article{vqe,
  title={A variational eigenvalue solver on a photonic quantum processor},
  author={Peruzzo, Alberto and McClean, Jarrod and Shadbolt, Peter and Yung, Man-Hong and Zhou, Xiao-Qi and Love, Peter J and Aspuru-Guzik, Al{\'a}n and O’brien, Jeremy L},
  journal={Nature communications},
  volume={5},
  number={1},
  pages={4213},
  year={2014},
  publisher={Nature Publishing Group UK London}
}

@article{vqas,
  title={The theory of variational hybrid quantum-classical algorithms},
  author={McClean, Jarrod R and Romero, Jonathan and Babbush, Ryan and Aspuru-Guzik, Al{\'a}n},
  journal={New Journal of Physics},
  volume={18},
  number={2},
  pages={023023},
  year={2016},
  publisher={IOP Publishing}
}

@article{qaoagood,
  title={Variational quantum algorithms for combinatorial optimization},
  author={Perez-Ramirez, Daniel F},
  journal={arXiv preprint arXiv:2407.06421},
  year={2024}
}

@article{qaoaevidence,
  title={Solving boolean satisfiability problems with the quantum approximate optimization algorithm},
  author={Boulebnane, Sami and Montanaro, Ashley},
  journal={PRX Quantum},
  volume={5},
  number={3},
  pages={030348},
  year={2024},
  publisher={APS}
}

@article{qaoaevidence2,
  title={Quantum supremacy through the quantum approximate optimization algorithm},
  author={Farhi, Edward and Harrow, Aram W},
  journal={arXiv preprint arXiv:1602.07674},
  year={2016}
}

@article{qaoaevidence3,
  doi = {10.22331/q-2022-07-07-759},
  url = {https://doi.org/10.22331/q-2022-07-07-759},
  title = {The {Q}uantum {A}pproximate {O}ptimization {A}lgorithm and the {S}herrington-{K}irkpatrick {M}odel at {I}nfinite {S}ize},
  author = {Farhi, Edward and Goldstone, Jeffrey and Gutmann, Sam and Zhou, Leo},
  journal = {{Quantum}},
  issn = {2521-327X},
  publisher = {{Verein zur F{\"{o}}rderung des Open Access Publizierens in den Quantenwissenschaften}},
  volume = {6},
  pages = {759},
  month = jul,
  year = {2022}
}

@article{qaoaforsim,
  title={Approximating Ground States of Quantum Hamiltonians with Snapshot-QAOA},
  author={Tate, Reuben and Langfitt, Quinn and Pelofske, Elijah and Kirmani, Ammar and B{\"a}rtschi, Andreas and Golden, John and Eidenbenz, Stephan},
  journal={arXiv preprint arXiv:2412.17990},
  year={2024}
}

@article{optop,
  title={Challenges and opportunities in quantum optimization},
  author={Abbas, Amira and Ambainis, Andris and Augustino, Brandon and B{\"a}rtschi, Andreas and Buhrman, Harry and Coffrin, Carleton and Cortiana, Giorgio and Dunjko, Vedran and Egger, Daniel J and Elmegreen, Bruce G and others},
  journal={Nature Reviews Physics},
  pages={1--18},
  year={2024},
  publisher={Nature Publishing Group}
}

@article{GW,
  title={Improved approximation algorithms for maximum cut and satisfiability problems using semidefinite programming},
  author={Goemans, Michel X and Williamson, David P},
  journal={Journal of the ACM (JACM)},
  volume={42},
  number={6},
  pages={1115--1145},
  year={1995},
  publisher={ACM New York, NY, USA}
}

@article{GW1,
author = {Karloff, Howard},
title = {How Good is the Goemans--Williamson MAX CUT Algorithm?},
journal = {SIAM Journal on Computing},
volume = {29},
number = {1},
pages = {336-350},
year = {1999},
doi = {10.1137/S0097539797321481}
}

@article{marshall2020characterizing,
  title={Characterizing local noise in QAOA circuits},
  author={Marshall, Jeffrey and Wudarski, Filip and Hadfield, Stuart and Hogg, Tad},
  journal={Iop Scinotes},
  volume={1},
  number={2},
  pages={025208},
  year={2020},
  publisher={IOP Publishing}
}

@article{xue2021effects,
  title={Effects of quantum noise on quantum approximate optimization algorithm},
  author={Xue, Cheng and Chen, Zhao-Yun and Wu, Yu-Chun and Guo, Guo-Ping},
  journal={Chinese Physics Letters},
  volume={38},
  number={3},
  pages={030302},
  year={2021},
  publisher={Chinese Physical Society and IOP Publishing Ltd}
}

@article{gradient,
  title={The effect of classical optimizers and Ansatz depth on QAOA performance in noisy devices},
  author={Pellow-Jarman, Aidan and McFarthing, Shane and Sinayskiy, Ilya and Park, Daniel K and Pillay, Anban and Petruccione, Francesco},
  journal={Scientific reports},
  volume={14},
  number={1},
  pages={16011},
  year={2024},
  publisher={Nature Publishing Group UK London}
}

@inproceedings{reinfo,
  title={Learning to optimize variational quantum circuits to solve combinatorial problems},
  author={Khairy, Sami and Shaydulin, Ruslan and Cincio, Lukasz and Alexeev, Yuri and Balaprakash, Prasanna},
  booktitle={Proceedings of the AAAI conference on artificial intelligence},
  volume={34},
  number={3},
  pages={2367--2375},
  year={2020}
}

@article{vqes,
  title={The variational quantum eigensolver: a review of methods and best practices},
  author={Tilly, Jules and Chen, Hongxiang and Cao, Shuxiang and Picozzi, Dario and Setia, Kanav and Li, Ying and Grant, Edward and Wossnig, Leonard and Rungger, Ivan and Booth, George H and others},
  journal={Physics Reports},
  volume={986},
  pages={1--128},
  year={2022},
  publisher={Elsevier}
}

@article{ugc,
  title={Optimal inapproximability results for {MAX-CUT} and other 2-variable CSPs?},
  author={Khot, Subhash and Kindler, Guy and Mossel, Elchanan and O’Donnell, Ryan},
  journal={SIAM Journal on Computing},
  volume={37},
  number={1},
  pages={319--357},
  year={2007},
  publisher={SIAM}
}

@article{hardparams,
  title={Training variational quantum algorithms is NP-hard},
  author={Bittel, Lennart and Kliesch, Martin},
  journal={Physical review letters},
  volume={127},
  number={12},
  pages={120502},
  year={2021},
  publisher={APS}
}

@article{resilience_VQEs,
  title={Noise Resilience and Robust Convergence Guarantees for the Variational Quantum Eigensolver},
  author={Legnini, Mirko and Berberich, Julian},
  journal={arXiv preprint arXiv:2601.16758},
  year={2026}
}

@article{variants,
  title={A review on quantum approximate optimization algorithm and its variants},
  author={Blekos, Kostas and Brand, Dean and Ceschini, Andrea and Chou, Chiao-Hui and Li, Rui-Hao and Pandya, Komal and Summer, Alessandro},
  journal={Physics Reports},
  volume={1068},
  pages={1--66},
  year={2024},
  publisher={Elsevier}
}

@article{fontana2021evaluating,
  title={Evaluating the noise resilience of variational quantum algorithms},
  author={Fontana, Enrico and Fitzpatrick, Nathan and Ramo, David Mu{\~n}oz and Duncan, Ross and Rungger, Ivan},
  journal={Physical Review A},
  volume={104},
  number={2},
  pages={022403},
  year={2021},
  publisher={APS}
}

@article{colless2018computation,
  title={Computation of molecular spectra on a quantum processor with an error-resilient algorithm},
  author={Colless, James I and Ramasesh, Vinay V and Dahlen, Dar and Blok, Machiel S and Kimchi-Schwartz, Mollie E and McClean, Jarrod R and Carter, Jonathan and de Jong, Wibe A and Siddiqi, Irfan},
  journal={Physical Review X},
  volume={8},
  number={1},
  pages={011021},
  year={2018},
  publisher={APS}
}

@article{herrman2022multi,
  title={Multi-angle quantum approximate optimization algorithm},
  author={Herrman, Rebekah and Lotshaw, Phillip C and Ostrowski, James and Humble, Travis S and Siopsis, George},
  journal={Scientific Reports},
  volume={12},
  number={1},
  pages={6781},
  year={2022},
  publisher={Nature Publishing Group UK London}
}

@article{magann2022feedback,
  title={Feedback-based quantum optimization},
  author={Magann, Alicia B and Rudinger, Kenneth M and Grace, Matthew D and Sarovar, Mohan},
  journal={Physical Review Letters},
  volume={129},
  number={25},
  pages={250502},
  year={2022},
  publisher={APS}
}

@article{zhu2022adaptive,
  title={Adaptive quantum approximate optimization algorithm for solving combinatorial problems on a quantum computer},
  author={Zhu, Linghua and Tang, Ho Lun and Barron, George S and Calderon-Vargas, Fernando A and Mayhall, Nicholas J and Barnes, Edwin and Economou, Sophia E},
  journal={Physical Review Research},
  volume={4},
  number={3},
  pages={033029},
  year={2022},
  publisher={APS}
}

@article{egger2021warm,
  title={Warm-starting quantum optimization},
  author={Egger, Daniel J and Mare{\v{c}}ek, Jakub and Woerner, Stefan},
  journal={Quantum},
  volume={5},
  pages={479},
  year={2021},
  publisher={Verein zur F{\"o}rderung des Open Access Publizierens in den Quantenwissenschaften}
}

@article{bravyi2020obstacles,
  title={Obstacles to variational quantum optimization from symmetry protection},
  author={Bravyi, Sergey and Kliesch, Alexander and Koenig, Robert and Tang, Eugene},
  journal={Physical review letters},
  volume={125},
  number={26},
  pages={260505},
  year={2020},
  publisher={APS}
}

@article{born1928beweis,
  title={Beweis des Adiabatensatzes},
  author={Born, Max and Fock, Vladimir},
  journal={Zeitschrift f{\"u}r Physik},
  volume={51},
  number={3},
  pages={165--180},
  year={1928},
  publisher={Springer}
}

@article{kato1950adiabatic,
  title={On the adiabatic theorem of quantum mechanics},
  author={Kato, Tosio},
  journal={Journal of the Physical Society of Japan},
  volume={5},
  number={6},
  pages={435--439},
  year={1950},
  publisher={The Physical Society of Japan}
}

@article{crooks2018performance,
  title={Performance of the quantum approximate optimization algorithm on the maximum cut problem},
  author={Crooks, Gavin E},
  journal={arXiv preprint arXiv:1811.08419},
  year={2018}
}

@article{harrigan2021quantum,
  title={Quantum approximate optimization of non-planar graph problems on a planar superconducting processor},
  author={Harrigan, Matthew P and Sung, Kevin J and Neeley, Matthew and Satzinger, Kevin J and Arute, Frank and Arya, Kunal and Atalaya, Juan and Bardin, Joseph C and Barends, Rami and Boixo, Sergio and others},
  journal={Nature Physics},
  volume={17},
  number={3},
  pages={332--336},
  year={2021},
  publisher={Nature Publishing Group UK London}
}

@article{wang2018quantum,
  title={Quantum approximate optimization algorithm for MaxCut: A fermionic view},
  author={Wang, Zhihui and Hadfield, Stuart and Jiang, Zhang and Rieffel, Eleanor G},
  journal={Physical Review A},
  volume={97},
  number={2},
  pages={022304},
  year={2018},
  publisher={APS}
}

@article{streif2020training,
  title={Training the quantum approximate optimization algorithm without access to a quantum processing unit},
  author={Streif, Michael and Leib, Martin},
  journal={Quantum Science \& Technology},
  volume={5},
  number={3},
  pages={034008},
  year={2020},
  publisher={IOP Publishing}
}

@article{pelofske2024scaling,
  title={Scaling whole-chip QAOA for higher-order Ising spin glass models on heavy-hex graphs},
  author={Pelofske, Elijah and B{\"a}rtschi, Andreas and Cincio, Lukasz and Golden, John and Eidenbenz, Stephan},
  journal={npj Quantum Information},
  volume={10},
  number={1},
  pages={109},
  year={2024},
  publisher={Nature Publishing Group UK London}
}

@article{maciejewski2026quantum,
  title={Quantum Approximate Optimization via Noise-Directed Adaptive Warm-Starting},
  author={Maciejewski, Filip B and Hadfield, Stuart and Wallis, Oscar and Pennington, George and Brandhofer, Sebastian and Woerner, Stefan and Egger, Daniel J and Venturelli, Davide},
  journal={arXiv preprint arXiv:2607.09368},
  year={2026}
}

@article{lee2023depth,
  title={A depth-progressive initialization strategy for quantum approximate optimization algorithm},
  author={Lee, Xinwei and Xie, Ningyi and Cai, Dongsheng and Saito, Yoshiyuki and Asai, Nobuyoshi},
  journal={Mathematics},
  volume={11},
  number={9},
  pages={2176},
  year={2023},
  publisher={MDPI}
}

@article{jansen2007bounds,
  title={Bounds for the adiabatic approximation with applications to quantum computation},
  author={Jansen, Sabine and Ruskai, Mary-Beth and Seiler, Ruedi},
  journal={Journal of Mathematical Physics},
  volume={48},
  number={10},
  year={2007},
  publisher={AIP Publishing}
}

@article{amin2009consistency,
  title={Consistency of the adiabatic theorem},
  author={Amin, Mohammad HS},
  journal={Physical review letters},
  volume={102},
  number={22},
  pages={220401},
  year={2009},
  publisher={APS}
}

@article{lbfgs,
  author = {Liu, Dong C. and Nocedal, Jorge},
  title  = {On the limited memory {BFGS} method for large scale optimization},
  journal = {Mathematical Programming},
  volume = {45}, number = {1}, pages = {503--528}, year = {1989}
}

@article{tong2005quantitative,
  title={Quantitative conditions do not guarantee the validity of the adiabatic approximation},
  author={Tong, DM and Singh, K and Kwek, Leong Chuan and Oh, Choo Hiap},
  journal={Physical review letters},
  volume={95},
  number={11},
  pages={110407},
  year={2005},
  publisher={APS}
}

@article{spsa,
  author = {Spall, James C.},
  title  = {Multivariate stochastic approximation using a simultaneous perturbation gradient approximation},
  journal = {IEEE Transactions on Automatic Control},
  volume = {37}, number = {3}, pages = {332--341}, year = {1992}
}

@book{bransden2000quantum,
  author    = {Bransden, B. H. and Joachain, C. J.},
  title     = {Quantum Mechanics},
  edition   = {2},
  publisher = {Pearson Education},
  address   = {Harlow},
  year      = {2000},
  isbn      = {9780582356917}
}

@book{messiah1962quantum,
  author      = {Messiah, Albert},
  title       = {Quantum Mechanics},
  volume      = {2},
  publisher   = {North-Holland},
  address     = {Amsterdam},
  year        = {1962},
  translator  = {Temmer, G. M.}
}

\onecolumngrid
\appendix
\section{Light Cone of Errors}

\label{appendix:lightcone}

Propagating the error to the end of circuit via unitary conjugation gives an error which acts as a unitary applied to the end of the circuit. Investigating the structure of this unitary gives us insight into error propagation. 
\begin{align}
    \tilde{U} &=K_j \hat{U}_p\ldots \hat{U}_{s+1} U_{s} \ldots U_{1}\\
    &=\hat{U}U_p\ldots U_1\\
    &=\hat{U}U.
\end{align}
Defining $\mathcal{N}_p(j)$ as nodes in the $p$ size light cone of $j$ (e.g. in \eqref{big} $\mathcal{N}(j) = \mathcal{N}_1(j)$) and $\mathcal{L}_p(j)$ as the edges in the $p$ size light cone of $j$ where $d(j,v)$ is the shortest distance from node j to v 
\begin{align}
    \mathcal{N}_p(j)&= \{v\in V: d(j,v)\leq p\},\\
    \mathcal{L}_p(j)&=\{(u,v) \in E: u \wedge v \in \mathcal{N}_p(j)\},
\end{align}
then, for $K_j = Z_j$
\begin{align}
    \hat{U} = &Z_j 
    \bigg[
    \prod_{l=s+2}^{\overset{\curvearrowleft}{p}}
    e^{i\beta_l\sum_{i\in\mathcal{N}_{l-s-1}(j),i\neq j}X_i}e^{-i\beta_lX_j}
    e^{i\gamma_l\sum_{(i,k)\in\mathcal{L}_{l-s-1}(j)}Z_iZ_k}
    \bigg] \\
    &\times\  e^{-2i\beta_{s+1}X_j} 
    \bigg[
    \prod_{l=s+2}^{\overset{\curvearrowright}{p}}
    e^{-i\gamma_l\sum_{(i,k)\in\mathcal{L}_{l-s-1}(j)}Z_iZ_k}
    e^{-i\beta_l\sum_{i\in\mathcal{N}_{l-s-1}(j)}X_i}
    \bigg],
\end{align}
For $K_j = X_j$
\begin{align}
    \hat{U} = &X_j 
    \bigg[
    \prod_{l=s+2}^{\overset{\curvearrowleft}{p}}
    e^{i\beta_l\sum_{i\in\mathcal{N}_{l-s}(j)}X_i}
    e^{i\gamma_l\sum_{(i,k)\in\mathcal{L}_{l-s}(j),i,k\notin \mathcal{N}(j)}Z_iZ_k}
    e^{-i\gamma_l\sum_{i\in\mathcal{N}(j)}Z_iZ_j}
    \bigg] \\
    &\times\  e^{i\beta_{s+1}\sum_{i\in\mathcal{N}_1(j)}X_i} 
    e^{-2i\gamma_{s+1}\sum_{i\in\mathcal{N}(j)}Z_iZ_j} 
    e^{-i\beta_{s+1}\sum_{i\in\mathcal{N}_1(j)}X_i} \\
    &\times\bigg[\prod_{l=s+2}^{\overset{\curvearrowright}{p}}
    e^{-i\gamma_l\sum_{(i,k)\in\mathcal{L}_{l-s}(j)}Z_iZ_k}
    e^{-i\beta_l\sum_{i\in\mathcal{N}_{l-s}(j)}X_i}
    \bigg].
\end{align}
This acts on the light cone of the affected qubit dependent on the kind of error. For $K_j=Z$ this acts on qubits in $\mathcal{N}_{p-s-1}(j)$ and depends on $2(p-s)-1$ parameters. For $K_j=X$ this acts on qubits in $\mathcal{N}_{p-s}(j)$ and depends on $2(p-s)+1$ parameters. 
Thus the impact of the error is based on the graph structure and connectivity of the affected qubits.

\section{Bounds on Error Resilience}
\subsection{Proof of Theorem \ref{theorem}}\label{appendix:theorem1}
\begin{proof}
Considering firstly the case where $K=Z$. Using \eqref{dragZ} and the fact that we measure in the computational basis, once propagated through the circuit, the $Z$ has no effect on the output distribution. Therefore,
\begin{align}
    \mathcal{D}(\tilde U_{j,s}^Z,U) &= \mathcal{D}(Z_j \hat{U}_p \ldots \hat{U}_{s+1} U_s \ldots U_1, U_p\ldots U_{1})\\
    &= \mathcal{D}(\hat{U}_p \ldots \hat{U}_{s+1} U_s \ldots U_1, U_p\ldots U_{1}).\label{dragZD}
\end{align}

We bound the TVD by the operator norm via the trace distance using the below lemma. 
\begin{lemma}\label{lemma}
The TVD defined in \eqref{tvd} can be bounded by the operator norm,
\begin{align}
    \mathcal{D}(U,V) \leq \|U-V\|.
\end{align}
where 
\begin{align}
    \mathcal{D}(\tilde{U},U) =  \| |\tilde U \ket{+^n}|^2 - | U \ket{+^n}|^2 \|_\text{tvd}.
\end{align}
and $\|\cdots\|$ denotes the operator norm. Proof in Appendix \ref{appendix:lemma1}
\end{lemma}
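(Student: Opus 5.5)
The plan is to reduce the statement to a comparison of two pure states and then use an elementary Cauchy--Schwarz estimate. Write $\ket{u} = U\ket{+^n}$ and $\ket{v} = V\ket{+^n}$, with computational-basis amplitudes $u_z = \braket{z|u}$ and $v_z = \braket{z|v}$. Both $U$ and $V$ are unitary, so $\|\ket{u}\| = \|\ket{v}\| = 1$. The two output distributions are $P(z) = |u_z|^2$ and $Q(z) = |v_z|^2$, and I take the TVD in \eqref{tvd} with the standard normalisation
\begin{align}
\mathcal{D}(U,V) = \tfrac{1}{2}\sum_{z}\big||u_z|^2 - |v_z|^2\big|.
\end{align}
If the paper instead intends the unnormalised $\ell_1$ distance, the same argument gives a bound with an extra factor $2$. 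Pinning down which convention is used (and hence whether the constant is $1$ or $2$) is the one point needing care, because the factor $2$ in Theorem \ref{theorem} has to be traced to the right place.

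First I would bound each term pointwise. By the reverse triangle inequality, $\big||u_z|^2 - |v_z|^2\big| = \big||u_z|-|v_z|\big|\,\big(|u_z|+|v_z|\big) \leq |u_z - v_z|\,\big(|u_z| + |v_z|\big)$. Summing over $z$ and applying Cauchy--Schwarz gives
\begin{align}
\sum_z \big||u_z|^2 - |v_z|^2\big| \leq \Big(\sum_z |u_z-v_z|^2\Big)^{1/2}\Big(\sum_z (|u_z|+|v_z|)^2\Big)^{1/2}.
\end{align}
The first factor is $\|\ket{u}-\ket{v}\|$. The second factor is at most $\|\ket{u}\| + \|\ket{v}\| = 2$, by Minkowski's inequality. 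Hence $\mathcal{D}(U,V) \leq \|\ket{u}-\ket{v}\|$.

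It then remains to pass to the operator norm: $\|\ket{u}-\ket{v}\| = \|(U-V)\ket{+^n}\| \leq \|U-V\|\,\|\ket{+^n}\| = \|U-V\|$, which is the claim.

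An alternative route goes through the trace distance, as the surrounding text suggests. Measuring in the computational basis is a CPTP map, so by data processing the TVD is at most the trace distance $\sqrt{1-|\braket{u|v}|^2}$. Writing $x = |\braket{u|v}|$, we have $\|\ket{u}-\ket{v}\|^2 = 2 - 2\,\mathrm{Re}\braket{u|v} \geq 2 - 2x \geq 1 - x^2$, where the last step is just $(1-x)^2 \geq 0$. This recovers the same bound, and I would mention it as a remark. I expect no genuine obstacle beyond fixing the normalisation of the TVD consistently with its later use in the proof of Theorem \ref{theorem}.
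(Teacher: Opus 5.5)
Your proof is correct, but your main argument takes a genuinely different route from the paper's. The paper goes through the trace distance. It cites the standard fact that the computational-basis TVD is at most the trace distance, and uses the pure-state formula $\mathcal{T}=\sqrt{1-|\bra{\psi}U^\dagger V\ket{\psi}|^2}$. It then simply asserts $\sqrt{1-|\bra{\psi}U^\dagger V\ket{\psi}|^2}\leq\|(U-V)\ket{\psi}\|\leq\|U-V\|$. Your primary argument never leaves the classical output distribution. The factorisation $|u_z|^2-|v_z|^2=(|u_z|-|v_z|)(|u_z|+|v_z|)$, the reverse triangle inequality, Cauchy--Schwarz and Minkowski give $\mathcal{D}\leq\|\ket{u}-\ket{v}\|$ directly. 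It is therefore fully self-contained and needs neither data processing nor the pure-state trace-distance formula. What the paper's route buys is the intermediate statement that the trace distance itself is at most $\|U-V\|$. That statement does not depend on which measurement is made, so it applies unchanged to any readout. Your closing remark is precisely the paper's proof. Your estimate $\|\ket{u}-\ket{v}\|^2=2-2\,\mathrm{Re}\braket{u|v}\geq 2-2x\geq 1-x^2$ supplies the one step the paper states without justification. On the normalisation question, the paper's appendix defines the TVD with the factor $\tfrac12$, so the constant is $1$ as claimed. The factor $2$ in Theorem~\ref{theorem} comes from $\|e^{-i\beta_iX_j}-e^{i\beta_iX_j}\|=2|\sin\beta_i|$, not from the TVD convention.
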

Using this, \eqref{dragZD} is bounded by
\begin{align}
    \mathcal{D}(\hat{U}_p \ldots \hat{U}_{s+1} U_s \ldots U_1, U_p\ldots U_{1}) \leq \|\hat{U}_p \ldots \hat{U}_{s+1} U_s \ldots U_1 -  U_p\ldots U_{1}\|.
\end{align}
Using the invariance of the operator norm under unitaries 
\begin{align}
    \|\hat{U}_p \ldots \hat{U}_{s+1} U_s \ldots U_1 -  U_p\ldots U_{1}\| = \|\hat{U}_p \ldots \hat{U}_{s+1} -  U_p\ldots U_{s+1}\|.
\end{align}
Then using the triangle inequality on the operator norm
\begin{align}
    \mathcal{D}(\tilde U_{j,s}^Z,U) 
    &\leq \|\hat{U}_p \ldots \hat{U}_{s+1} - \hat{U}_p \ldots \hat{U}_{s+2} U_{s+1}\| 
    \\&\quad+ \|\hat{U}_p \ldots \hat{U}_{s+2}U_{s+1}- \hat{U}_p \ldots \hat{U}_{s+3} U_{s+2}U_{s+1}\|\notag\\&\quad\dots\notag\\
    &\quad+\|\hat{U}_p\hat{U}_{p-1}U_{p-2}\dots U_{s+1}- \hat{U}_p U_{p-1}\dots U_{s+1}\|\notag\\
    &\quad+\|\hat{U}_pU_{p-1}\dots U_{s+1}- U_p\dots U_{s+1}\|\notag\\
    & = \sum_{i = s+1}^{p} \|\hat{U_i}- U_i\|.
\end{align}
where in the final line we can again use the unitary invariance of the operator norm to remove the unitaries that appear in both terms of each norm

Using \eqref{big} we calculate $\|\hat{U_i}- U_i\|$ 
\begin{align}
    \| \hat{U}_i - U_i \| &= \| e^{-i \beta_i X_j} - e^{i \beta_i X_j} \|\\
&=2|\sin(\beta_i)|\\
&\leq 2|\beta_i| \label{Z}.
\end{align}
Therefore
\begin{align}
    \mathcal{D}(\tilde{U}^Z_{j,s},U) \leq 2\sum_{i = s+1}^{p}|\sin{\beta_i}|\leq2\sum_{i = s+1}^{p}|\beta_i|.
\end{align}

For $K = X$, a similar proof follows. Using \eqref{dragX} and the fact that we begin in $\ket{+^n}$, the $X$ error propagated to the beginning has no effect on the output distribution as so can be dropped from the TVD 
\begin{align}
    \mathcal{D}(\tilde U^X_{j,s},U) &= \mathcal{D}(U_p \ldots U_{s+1} \hat{U}_s \ldots \hat{U}_1 X_j, U_p\ldots U_{1})\\
    &= \mathcal{D}(U_p \ldots U_{s+1} \hat{U}_s \ldots \hat{U}_1, U_p\ldots U_{1}).
\end{align}
Following the same steps as the $K=Z$ proof we get that 
\begin{align}
    \mathcal{D}(\tilde{U}_{j,s}^X,U) 
    &\leq 
    \sum_{i = 1}^{s} \|\hat{U}_i- U_i\|.
\end{align}
Using \eqref{big}
\begin{align}
    \| \hat{U}_i - U_i\| &= \| e^{-i\gamma_i\sum_{k\sim j}Z_{k}Z_j-}e^{i\gamma_i\sum_{k\sim j}Z_{k}Z_j}\|\\
    &= \max_x |e^{-if(x)} - e^{if(x)}|\\
    &= \max_x |2\sin{(f(x))}|,
\end{align}
where 
\begin{align}
    f(x) = \gamma_i(-1)^{x_j}\sum_{k\in\mathcal{N}(j)}(-1)^{x_k},
\end{align}
it then follows that, for node $j$ with degree $d_j$ that $-d_j\gamma_i \leq f(x) \leq d_j\gamma_i$, meaning
\begin{align}
    \| \hat{U}_i - U_i\| &= \max_{\omega \in \{0,\ldots,d_j\}}2|\sin{(\omega\gamma_i})|\\
    &\leq2d_j|\gamma_i|\label{X},
\end{align}
and we have that
\begin{align}
    \mathcal{D}(\tilde{U}^X_{j,s},U) \leq 2\max_{\omega \in \{0,\ldots,d_j\}}\sum_{i = 1}^{s}|\sin{\omega\gamma_i}|\leq2d_j\sum_{i = 1}^{s}|\gamma_i|.
\end{align}

In the case $K=Y$ we can represent a $Y$ error as $-iZX$ and can drag the $X$ and $Z$ through individually and drop the constant phase with the $Z$. 

\begin{align}
    \mathcal{D}(\tilde{U},U) &= \mathcal{D}(U_p\ldots U_{s+1}Y_{j}U_{s}\ldots U_1,U_p\ldots U_1)\\
    &= \mathcal{D}(U_p\ldots U_{s+1}(-iZ_{j}X_{j})U_{s}\ldots U_1,U_p\ldots U_1)\\
    &=\mathcal{D}((-i)Z_{j}\hat U_p^Z(j)\ldots \hat U_{s +1}^Z(j)\hat U_{s}^X(j)\ldots \hat U_1^X(j) X_{j},U_p\ldots U_1)\\
    &=\mathcal{D}(\hat U_p^Z(j)\ldots \hat U_{s +1}^Z(j)\hat U_{s}^X(j)\ldots \hat U_1^X(j),U_p\ldots U_1),
\end{align}
where here $\hat U^K_i$ denotes the $i$th layer having had a Pauli $K$ propagated through. 

By a similar use of the triangle inequality, this case is just additive:
\begin{align}
    \mathcal{D}(\tilde{U}^Y_{j,s},U)\leq \sum_{i=1}^{s}\|\hat U_i ^X - U_i\|+\sum_{i=s+1}^{p}\|\hat U_i ^Z - U_i\|.
\end{align}

Grouping together terms we get 
\begin{align}
\mathcal{D}(\tilde{U}^K_{j,s},U)
  \leq 2\Big[\,
    &\chi_{K,X}\sum_{i=s+1}^{p}|\sin\beta_i| \\
  + &\chi_{K,Z}\max_{\omega\in\{0,\dots,d_j\}}\sum_{i=1}^{s}|\sin(\omega\gamma_i)|
  \,\Big] \\
  \leq 2\Big[\,
    &\chi_{K,X}\sum_{i=s+1}^{p}|\beta_i|
  + \chi_{K,Z}\,d_j\sum_{i=1}^{s}|\gamma_i|
  \,\Big].
\end{align}
\end{proof}

\subsection{Proof of Lemma 
\ref{lemma}}\label{appendix:lemma1}
\begin{proof}
For two arbitrary unitary quantum circuits U and V, the TVD between the final distributions after measuring in the computational basis is \begin{align}
\mathcal{D}(U\ket{\psi},V\ket{\psi}) = \| |U \ket{\psi}|^2 - | V \ket{\psi}|^2 \|_\text{tvd} = \frac{1}{2} \sum_{x\in\{0,1\}^n}\left||\bra{x}U\ket{\psi}|^2-|\bra{x}V\ket{\psi}|^2\right|,
\end{align}and trace distance between the quantum states at the end of the circuit
\begin{align}
\mathcal{T}(U\ket{\psi},V\ket{\psi}) =  \frac{1}{2} \|U\ket{\psi}\bra{\psi}U^\dagger - V\ket{\psi}\bra{\psi}V^\dagger \|_1 = \frac{1}{2}\text{Tr}|U\ket{\psi}\bra{\psi}U^\dagger - V\ket{\psi}\bra{\psi}V^\dagger|.
\end{align}
The TVD is bounded by the trace distance \cite{nielsen2010quantum}
\begin{align}
    \mathcal{D}(U\ket{\psi},V\ket{\psi})\leq\mathcal{T}(U\ket{\psi},V\ket{\psi}).
\end{align}
The trace distance is bounded by the operator norm:
\begin{align}
    \mathcal{T}(U\ket{\psi}, V\ket{\psi}) = 
    \sqrt{1-|\bra{\psi}U^\dagger V\ket{\psi}|^2} \leq \|(U-V)\ket{\psi}\| \leq \|U-V\|.
\end{align}
Therefore we get that
\begin{align}
\mathcal{D}(U\ket{\psi},V\ket{\psi})\leq\mathcal{T}(U\ket{\psi},V\ket{\psi})\leq\|U-V\|.
\end{align}
\end{proof}

\section{Error Resilience Without Triviality}\label{appendix:performance}

The bound \eqref{DK} derived for error resilience shows that smaller angles decrease the effect of errors on the output state. This may seem trivial as smaller angles mean the circuit does not get as far from the initial state and hence errors have less of an effect. However, we show there exists a separation between the regime of error resilience and the regime of a trivial circuit that does not move far from the initial state, via similar methods to the proof of \eqref{DK}. 

\begin{theorem} \label{theorem:performance}
Let $C(x)$ be a cost function on graph $\mathcal{G}$ with edge set $E$ and $n$ nodes, where node $i$ has degree $d_i$. 

We consider a QAOA circuit $U$ with $p$ layers and parameters $\gamma= \{\gamma_i\}, \beta = \{\beta_i\}$ for $i\in\{1,\ldots,p\}$. 

The TVD between the initial state and the final state after the circuit is
\begin{align}
    \mathcal{D}(U,I)\leq2\sum_{i=1}^p \left[\max_{x\in\{0,1\}^n}|\sin(\frac{\gamma_iC(x)}2)| + \max_{k\in\{-n,\ldots,n\}}|\sin(\frac{\beta_ik}2)| \right].
\end{align}
\end{theorem}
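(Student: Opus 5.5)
We bound $\mathcal{D}(U,I)$ with the same telescoping argument used in the proof of Theorem~\ref{theorem}. Here the comparison circuit is the identity rather than a sign-flipped circuit. By Lemma~\ref{lemma},
\begin{align}
\mathcal{D}(U,I)\leq \|U-I\| = \|U_p\cdots U_1 - I\|.
\end{align}
Each layer splits as $U_i = e^{i\beta_iH_M}e^{i\gamma_iH_C}$. We telescope over the $2p$ factors, replacing them by the identity one at a time. Each step costs $\|e^{i\beta_iH_M}-I\|$ or $\|e^{i\gamma_iH_C}-I\|$, because unitary invariance of the operator norm removes the common unitaries on either side. This gives
\begin{align}
\mathcal{D}(U,I)\leq \sum_{i=1}^p\Big(\|e^{i\gamma_iH_C}-I\| + \|e^{i\beta_iH_M}-I\|\Big).
\end{align}

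Next we evaluate each term spectrally, using $|e^{i\theta}-1| = 2|\sin(\theta/2)|$. The operator $H_C$ is diagonal in the computational basis with eigenvalues $C(x)$, so
\begin{align}
\|e^{i\gamma_iH_C}-I\| = \max_{x\in\{0,1\}^n} 2\left|\sin\left(\tfrac{\gamma_iC(x)}{2}\right)\right|.
\end{align}
The operator $H_M=\sum_j X_j$ is diagonal in the Hadamard basis with eigenvalues $k = n-2w$, where $w\in\{0,\dots,n\}$. These values lie in $\{-n,\dots,n\}$, so
\begin{align}
\|e^{i\beta_iH_M}-I\| \leq \max_{k\in\{-n,\ldots,n\}} 2\left|\sin\left(\tfrac{\beta_ik}{2}\right)\right|.
\end{align}
Maximizing over the larger set $\{-n,\dots,n\}$ only loosens the bound, which is harmless. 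Summing over layers gives the stated inequality.

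An alternative is to use the eigenstate structure directly. Since $\ket{+^n}$ is the $k=n$ eigenstate of $H_M$, the first mixer factor could be handled exactly. The operator-norm route is simpler and matches the stated form, so I would use it.

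I expect no real obstacle. The only care needed is in the telescoping bookkeeping: the $2p$ intermediate products must be ordered so that each difference isolates a single factor, with unitaries on both sides. The two spectral evaluations of $H_C$ and $H_M$ are routine.
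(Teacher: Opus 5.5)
Your proposal is correct and follows the paper's proof: Lemma~\ref{lemma}, then telescoping over the $2p$ cost and mixer factors using unitary invariance, then evaluating $\|e^{i\gamma_iH_C}-I\|$ and $\|e^{i\beta_iH_M}-I\|$ spectrally via $|e^{i\theta}-1|=2|\sin(\theta/2)|$. The differences are cosmetic. You telescope directly on $\|U-I\|$, whereas the paper telescopes on the states $U\ket{+^n}$ and $\ket{+^n}$. You also correctly state the mixer term as an inequality, since the spectrum $\{n-2w\}$ is only a subset of $\{-n,\dots,n\}$, where the paper writes an equality.
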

\begin{proof}
    Looking at the TVD between the initial and final states, using Lemma \ref{lemma} and expanding the circuit into it's components
\begin{align}
    D(U,I) &\leq \| e^{i\beta_pM}e^{i\gamma_pC}\ldots e^{i\beta_1M}e^{i\gamma_1C}\ket{+^n} - \ket{+^n}\| \\
    &\leq\|e^{i\gamma_1C}\ket{+^n}-\ket{+^n}\| + \| e^{i\gamma_1C} e^{i\beta_1M} \ket{+^n} - e^{i\gamma_1C}\ket{+^n}\|\\
    &\quad +\ldots\\&\quad+\|e^{i\beta_pM}e^{i\gamma_pC}\ldots e^{i\beta_1M}e^{i\gamma_1C}\ket{+^n} - e^{i\gamma_pC}\ldots e^{i\beta_1M}e^{i\gamma_1C}\ket{+^n}\|\\
    &\leq\|e^{i\gamma_1C}-I\| + \|e^{i\beta_1M}-I\| + \ldots + \|e^{i\gamma_pC}-I\| + \|e^{i\beta_pM}-I\|\\
    &=\sum_{j=1}^{p} \|e^{i\gamma_jC}-I\| + \|e^{i\beta_jM}-I\|.
\end{align}
These individually are 
\begin{align}
    \|e^{i\gamma_sC}-I\| = \max_{x\in\{0,1\}^n}|e^{i\gamma_sC(x)}-1| =  \max_{x\in\{0,1\}^n}2|\sin(\frac{\gamma_sC(x)}2)|,
\end{align}
and
\begin{align}
    \|e^{i\beta_sM}-I\| = \max_{k\in\{-n,\ldots,n\}}|e^{i\beta_sk}-1| =  \max_{k\in\{-n,\ldots,n\}}2|\sin(\frac{\beta_sk}2)|.
\end{align}
plugging these back in we find 
\begin{align}
    D(U,I)\leq2\sum_{i=1}^p \left[\max_{x\in\{0,1\}^n}|\sin(\frac{\gamma_iC(x)}2)| + \max_{k\in\{-n,\ldots,n\}}|\sin(\frac{\beta_ik}2)| \right].
\end{align}
\end{proof}
This gives an indication of the ability of the circuit to move the state away from the initial state i.e. it's ability to do anything at all. The goal would be to minimize the error induced distance \eqref{DK} while ensuring this distance is not too small.

If we take  
\begin{align}
    \mathcal{D}(\tilde U, U) \leq \delta ,
\end{align}
where $\delta\ll1$ means the circuit is robust to noise, we bound $\Gamma=\sum_i^p\gamma_i$ and $B=\sum_i^p\beta_i$ to this regime. 

If we choose 
\begin{align}
    \Gamma \sim\frac{\delta}{2d_j} \quad \text{and} \quad B\sim\frac{\delta}{2},
\end{align}
then
\begin{equation}
  D(\tilde U, U)\leq \delta \quad\text{and} \qquad
  D(U,I) \lesssim \frac{|E|}{2d_j}\delta+ \frac{n}{2}\delta \sim n\delta \gg \delta.
\end{equation}
This indicates that we would expect to see a separation between this distance and the error distances in this regime though this upper bound doesn't guarantee a gap.

\begin{figure}
    \centering
    \includegraphics[width=0.6\linewidth]{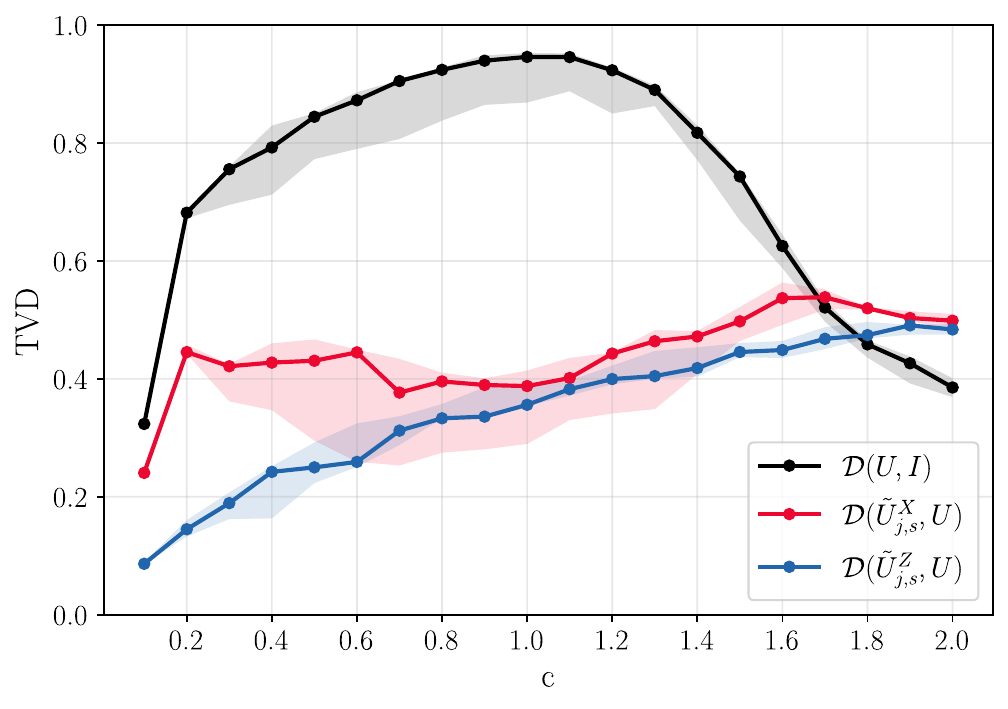}
    \caption{Median total variation distance for QAOA with $p = 10$ across five 10 node 3-regular graphs with predicted parameters generated using \textit{balanced method} from \cite{ieva} scaled by a factor $c$ i.e. $\{\gamma_i,\beta_i\}\rightarrow\{c\gamma_i,c\beta_i\}$. We plot the distance $\mathcal{D}(U,I)$ to show how far the set of parameters can move the state from the initial state. From a scaling of roughly $c=0.2$ we can see the algorithm is able to move from the initial state
    We also plot the error distances $\mathcal{D}(\tilde{U}^K_{j,s},U)$ for $K\in\{X,Y\}$. The median taken across all $j$ and $s$ before the median is taken across graphs. This shows regimes of small parameter scaling where the error distances are not large but the circuit is able to meaningfully move the state from the initial state. The shaded regions show the interquartile range. }
    \label{fig:scaling}
\end{figure}
To validate that there can indeed be a gap in practice, we show an example in Fig.~\ref{fig:scaling}. For this test we ran QAOA with predicted parameters scaled by a varying factor $c$. The results show that for small parameters, the circuit is still able to move meaningfully far from the initial state. In the small angle regime, it is possible to have a large distance between the resultant state and initial state whilst maintaining a small average distance in the presence of errors, showing the bound in \eqref{DK} is not trivial in that it is not just small for small parameters because the state is not moving. The error distances increase with $c$ as expected.

\section{Adiabatic Unstructured Quantum Search with Two Schedules}\label{appendix:Adiabatic}

Here we derive the condition for local adiabatic evolution for unstructured quantum search for the two schedule adiabatic algorithm as given in \eqref{ham}-\eqref{H0}. This proof closely follows that of Roland and Cerf in \cite{roland2002quantum} as well as that of Isermann in \cite{isermann2021optimal} but with various parts generalized to two schedules.

\subsection{Finding $g(t)$} \label{appendix:Adiabaticgap}
To find the gap $g(t)$ between the two lowest energies of the quantum search Hamiltonian we must find the eigenvalues. We expand $\ket{+^n}$ in terms of the marked and unmarked states and define the state $\ket{r}$ as the superposition of unmarked states
\begin{align}
    \ket{+^n} &= \frac{1}{\sqrt{N}}\sum_x\ket{x} =a\ket{m} + b\ket{r}\\
    \ket{r} &= \frac{1}{\sqrt{N-1}}\sum_{x\neq m}\ket{x},
\end{align}
so that we can write $H$ in terms of the orthonormal basis $\{\ket{m},\ket{r}\}$. Using
\begin{align}
    a&= \braket{m|+^n} = \frac{1}{\sqrt{N}}\\
    b&= \sqrt{\frac{N-1}{N}}=\sqrt{1-a^2},
\end{align}
we rewrite $H$ in terms of its components $H_{ij}$
\begin{align}
    H_{mm} &= \bra{m}H\ket{m} \\&= 
    \gamma(t) \bra{m}(I-\ket{m}\bra{m})\ket{m} + 
    \beta (t) \bra{m}(I-\ket{+^n}\bra{+^n})\ket{m}\\
    &=\beta(t)\braket{m|m} - \beta(t) |\braket{m|+^n}|^2\\
    &=\beta(t)-\frac{\beta(t)}{N}.
\end{align}
Similarly 
\begin{align}
    H_{mr} = H_{rm} &=-\beta(t)\frac{\sqrt{N-1}}{N},\\
    H_{rr} &=\gamma(t)+\frac{\beta(t)}{N}.
\end{align}
This gives
\begin{align}
H=
\begin{pmatrix}
    \beta(t)-\frac{\beta(t)}{N} & -\beta(t)\frac{\sqrt{N-1}}{N}\\
    -\beta(t)\frac{\sqrt{N-1}}{N} & \gamma(t)+\frac{\beta(t)}{N}
\end{pmatrix}.
\end{align}
We find the eigenvalues $E_{0/1}$ of this matrix 
\begin{align}\label{energies}
    E_{1/0}(t) = \frac{1}{2}\bigg(
    \beta(t) + \gamma(t) \pm \sqrt{(\beta(t)+\gamma(t))^2-4\beta(t)\gamma(t)\left(1-\frac{1}{N}\right)}
    \bigg),
\end{align}
so that the gap is given by
\begin{align}\label{g}
    g(t) &= E_1(t)-E_0 (t) \\&=\sqrt{(\beta(t)+\gamma(t))^2-4\beta(t)\gamma(t)\left(1-\frac{1}{N}\right)} \\&= \sqrt{(\beta(t)-\gamma(t))^2+\frac{4\beta(t)\gamma(t)}{N}}.
\end{align}

\subsection{Finding $|\bra{E_1;t}\frac{dH}{dt}\ket{E_0;t}| $}\label{appendix:AdiabaticdH}
 
To find $|\bra{E_1;t}\frac{dH}{dt}\ket{E_0;t}| $, we use an ansatz for the eigenstates of $H(t)$, $\ket{E_i(\gamma,\beta)}$ in the basis $\ket{+^n}$ and $\ket{m}$,
\begin{align}
\ket{E_i(\gamma,\beta)}=\mu_i(\gamma,\beta)\ket{\psi_0}+\nu_i(\gamma,\beta)\ket{m},
\end{align}
where $\mu_i,\nu_i$ are real. Substituting these ansatz into the matrix element gives
\begin{align}
    |\bra{E_1;t}\frac{dH}{dt}\ket{E_0;t}| &= |\bra{E_1;t}\bigg[\frac{d\gamma}{dt}\frac{\partial H}{\partial \gamma} + \frac{d\beta}{dt}\frac{\partial H}{\partial \beta}\bigg]\ket{E_0;t}|\\
    &=|\frac{d\gamma}{dt}\bra{E_1} H_m\ket{E_0} + \frac{d\beta}{dt}\bra{E_1} H_0\ket{E_0}|\\
    &=|\frac{d\gamma}{dt}(\mu_1\mu_0\left(1-\frac{1}{N}\right)) + \frac{d\beta}{dt}(\nu_1\nu_0\left(1-\frac{1}{N}\right))|.\label{dhdt}
\end{align}

Using orthonormality of the eigenstates we get the following relations
\begin{align}
    \braket{E_0|E_1} &=
    \mu_0\mu_1+\nu_0\nu_1+\frac{1}{\sqrt{N}}(\mu_0\nu_1+\nu_0\mu_1)=0,\\
    \braket{E_i|E_i} &=
    \mu_i^2+\nu_i^2+\frac{2}{\sqrt{N}}\mu_i\nu_i=1\label{orthogonal}.
\end{align}
Using the fact that these are eigenstates, 
\begin{align}
H(\gamma,\beta)\ket{E_i(\gamma,\beta)}&=E_i\ket{E_i(\gamma,\beta)},
\end{align}
gives
\begin{align}
    (\gamma(I-\ket{m}\bra{m}) +\beta(I-\ket{\psi_0}\bra{\psi_0}))(\mu_i\ket{\psi_0} + \nu_i\ket{m})&= E_i(\mu_i\ket{\psi_0} + \nu_i\ket{m}).
\end{align}
Expanding and grouping like terms gives 
\begin{align}
    [\mu_i\gamma-\frac{\nu_i\beta}{\sqrt{N}}-E_i\mu_i]\ket{\psi_0}+
    [\nu_i\beta-\frac{\mu_i\gamma}{\sqrt{N}}-E_i\nu_i]\ket{m} =0.
\end{align}
Since $\ket{\psi_0}$ and $\ket{m}$ are linearly independent, each coefficient must vanish
independently, giving the following relations
\begin{align}
    \nu_i(\beta-E_i)&=\frac{\mu_i\gamma}{\sqrt{N}}\label{alpha2}\\
    \mu_i(\gamma-E_i)&=\frac{\nu_i\beta}{\sqrt{N}}\label{beta2}.
\end{align}

Rearranging and squaring \eqref{alpha2} gives
\begin{align}
    \nu_i^2=\mu_i^2\frac{\gamma^2}{N(\beta-E_i)^2},
\end{align}
substituting that into the orthogonality condition \eqref{orthogonal}, we get
\begin{align}
    \mu_i^2\bigg(
    1+\frac{\gamma^2}{N(\beta-E_i)^2} + \frac{2\gamma}{N(\beta-E_i)}
    \bigg)=1,
\end{align}
which gives
\begin{align}\label{mu}
    \mu_i^2 = \frac{N(\beta-E_i)^2}{(N-1)(\beta-E_i)^2+(\gamma+\beta-E_i)^2}
\end{align}
and 
\begin{align}\label{nu}
    \nu_i^2 = \frac{\gamma^2}{(N-1)(\beta-E_i)^2+(\gamma+\beta-E_i)^2}.
\end{align}

Using the energies $E_0$ and $E_1$,we calculated previously \eqref{energies} and relations between them
\begin{align}
    g(t) &= E_1(t)-E_0(t)\label{eq:g}\\
    E_0+E_1 &= \gamma+\beta\\
    E_0E_1&=\beta\gamma\left(1-\frac{1}{N}\right),
\end{align}
which give,
\begin{align}
  (\beta-E_0)(\beta-E_1)=-\frac{\beta\gamma}{N},\label{eq:consequences2}\\
  \gamma+\beta-E_i=E_{1-i}.
  \label{eq:consequences}
\end{align}
we evaluate \eqref{dhdt}.

Using the definition for $\mu_i$ in \eqref{mu}
\begin{align}
    \mu_1\mu_0\left(1-\frac{1}{N}\right) &= \left(1-\frac{1}{N}\right) 
    \sqrt{\frac{N(\beta-E_1)^2}{(N-1)(\beta-E_1)^2+(\gamma+\beta-E_1)^2}}
    \\&\qquad\qquad\qquad\qquad\times
    \sqrt{\frac{N(\beta-E_0)^2}{(N-1)(\beta-E_0)^2+(\gamma+\beta-E_0)^2}}.
\end{align}

Using \eqref{eq:consequences2} the numerator simplifies to 
\begin{equation}
  N^2(\beta-E_0)^2(\beta-E_1)^2
  =\left(N(\beta-E_0)(\beta-E_1)\right)^2
  =\beta^2\gamma^2.
  \label{eq:num}
\end{equation}

Using \eqref{eq:g} and \eqref{eq:consequences}, the denominator simplifies to 
\begin{equation}
  \bigl((N-1)(\beta-E_1)^2+E_0^2\bigr)\bigl((N-1)(\beta-E_0)^2+E_1^2\bigr) = (N-1)\gamma^2g^2,
\end{equation}
such that the expression becomes 
\begin{align}
    \mu_1\mu_0\left(1-\frac{1}{N}\right)
 &= \left(\frac{N-1}{N}\right)\frac{\sqrt{\gamma^2\beta^2}}{\sqrt{(N-1)\gamma^2g^2}}\\&=\frac{\sqrt{N-1}}{N}\frac{\beta}{g}.
\end{align}

If we fix $\mu_0,\mu_1\geq0$ then, via \eqref{alpha2}, we know $\nu_0\geq0$ and $\nu_1\leq0$. 

Using the definition for $\nu_i$ in \eqref{nu}
\begin{align}
    \nu_1\nu_0\left(1-\frac{1}{N}\right) &= \left(1-\frac{1}{N}\right)\times - 
    \sqrt{\frac{\gamma^2}{(N-1)(\beta-E_1)^2+(\gamma+\beta-E_1)^2}}
    \\&\qquad\qquad\times
    \sqrt{\frac{\gamma^2}{(N-1)(\beta-E_0)^2+(\gamma+\beta-E_0)^2}}\\
    &=-\left(1-\frac{1}{N}\right)\frac{\gamma^2}{\sqrt{(N-1)\gamma^2g^2}}\\&=-\frac{\sqrt{N-1}}{N}\frac{\gamma}{g}.
\end{align}
Finally putting this back into \eqref{dhdt} 
\begin{align}
    |\bra{E_1;t}\frac{dH}{dt}\ket{E_0;t}| &= |\frac{d\gamma}{dt}(\mu_1\mu_0\left(1-\frac{1}{N}\right)) + \frac{d\beta}{dt}(\nu_1\nu_0\left(1-\frac{1}{N}\right))|\\
    &=\frac{\sqrt{N-1}}{Ng}|\frac{d\gamma}{dt}\beta - \frac{d\beta}{dt}\gamma|.
\end{align}

\subsection{Proof that \eqref{rate} holds in the crossing region}\label{appendix:rate_crossing}

To see that \eqref{rate} holds for all $\gamma,\beta$ in this region, write
$\gamma = (1-\varepsilon_\gamma)\delta$ and $\beta = (1+\varepsilon_\beta)\delta$, where
\begin{align}
    |\gamma-\beta| = |(\varepsilon_\gamma + \varepsilon_\beta)\delta|&< \frac{2\delta\sqrt{(1+\varepsilon_\beta)(1-\varepsilon_\gamma)}}{\sqrt{N}},\\
    |\varepsilon_\gamma + \varepsilon_\beta|&<\frac{2\sqrt{(1+\varepsilon_\beta)(1-\varepsilon_\gamma)}}{\sqrt{N}},\\
    &\approx \frac{2}{\sqrt{N}}.\label{epsilons}
\end{align}
Inserting this $\gamma$ and $\beta$ into the left hand side of \eqref{constraint}, we get
\begin{align}
    \frac{d\gamma}{dt}\beta - \frac{d\beta}{dt}\gamma
    &= \delta\left(\frac{d\gamma}{dt} - \frac{d\beta}{dt}\right)
    + \delta\left(\frac{d\gamma}{dt}\varepsilon_\beta + \frac{d\beta}{dt}\varepsilon_\gamma\right).
\end{align}
and lower bound this using the worst case, $|\varepsilon_\gamma|,|\varepsilon_\beta| \leq 2/\sqrt{N}$  by monotonicity $\dot\gamma \geq 0$ and $\dot\beta \leq 0$, 
\begin{align}
    \delta\,(\frac{d\gamma}{dt} - \frac{d\beta}{dt}) + \delta(\frac{d\gamma}{dt}\varepsilon_\beta + \frac{d\beta}{dt}\varepsilon_\gamma)&\geq\delta\,(\frac{d\gamma}{dt} - \frac{d\beta}{dt}) + \delta(\frac{d\gamma}{dt}\left(-\frac{2}{\sqrt{N}}\right)+\frac{d\beta}{dt}\left(\frac{2}{\sqrt{N}}\right))
    \\&=\delta\,(\frac{d\gamma}{dt} - \frac{d\beta}{dt}) - \frac{2\delta}{\sqrt N}(\frac{d\gamma}{dt} - \frac{d\beta}{dt}).
\end{align}
So for large $N$, 
\begin{align}
 \delta(1 - \frac{2}{\sqrt N})(\frac{d\gamma}{dt} - \frac{d\beta}{dt}) &\leq \frac{8\eta(1-\varepsilon_\gamma)^{3/2}(1+\varepsilon_\beta)^{3/2}\delta^3}{N},\\
 &\approx \frac{8\eta\delta^3}{N} \left(1+\mathcal{O}(\tfrac{1}{\sqrt{N}}) \right),\\
 &\lesssim \frac{8\eta\delta^3}{N},\\
    (\frac{d\gamma}{dt} - \frac{d\beta}{dt}) &\leq \frac{8\eta\delta^2}{N}\left(1+\frac{2}{\sqrt{N}-2}\right),\\
    &\approx\frac{8\eta\delta^2}{N}.
\end{align}
giving \eqref{rate} for the whole region.

\newpage
\section{Noise-informed Parameter Optimization} \label{appendix:penalty}

Fig.~\ref{fig:penalty-plot-avg} in the main text shows the noisy approximation ratios for different parameter optimization methods averaged over 5 graph instances. For completeness the plots in Fig.~\ref{fig:All_graphs} show the individual results for all graphs. The separate results show the methods outlined in section~\ref{section:penalty} are perhaps more or less effective for certain types of graphs though the results are inconclusive. 

\begin{figure}[b]
    \centering
    \includegraphics[width=0.78\linewidth]{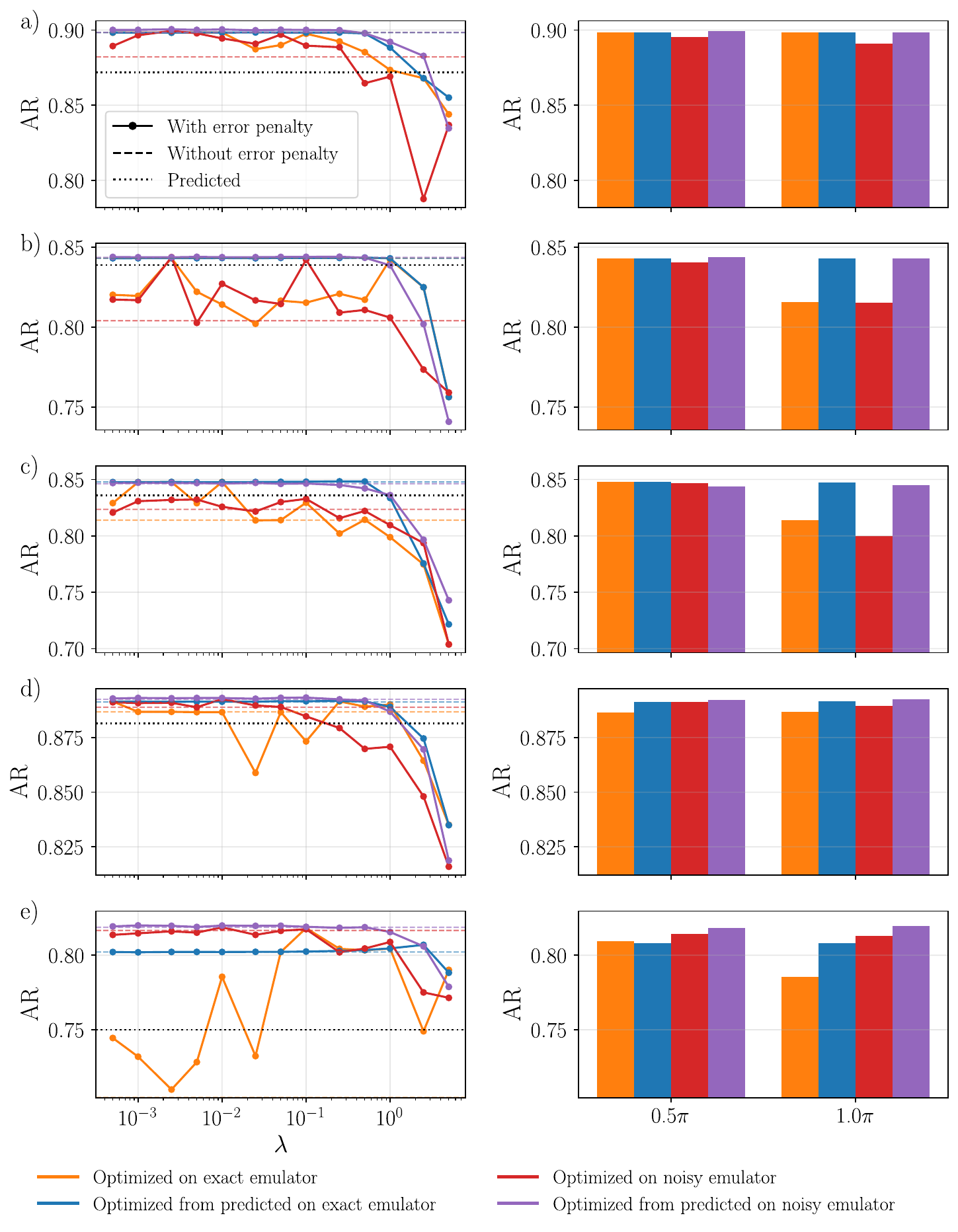}
    \caption{Noisy approximation ratios for the individual graphs averaged over in Fig.~\ref{fig:penalty-plot-avg}. All graphs had 6 nodes, a) and b) show results for Erd\H{o}s -- R\'enyi graphs, c) for the line graph, and d) and e) for 3-regular graphs. As in Fig.~\ref{fig:penalty-plot-avg} the plots on the left-hand side show optimization with the additional penalty in \eqref{penalty} and the right hand side plots show optimization using a truncated parameters search space where the optimizer was restricted to search $\gamma$ and $-\beta$ up to $\pi$ and $\pi/2$ as in \eqref{truncation}.}
    \label{fig:All_graphs}
\end{figure}

\clearpage
\end{document}